\newtheorem{theorem}{Theorem} [section]
\newtheorem{corollary}[theorem]{Corollary}
\newtheorem{definition}[theorem]{Definition}
\newtheorem{example}[theorem]{Example}
\newtheorem{lemma}[theorem]{Lemma}
\newtheorem{proposition}[theorem]{Proposition}
\newenvironment{proof}[1][Proof]{\textbf{#1.} }{\ \rule{0.5em}{0.5em}}
\begin{document}

\author{Luis A. Guardiola\thanks{%
Departamento de Fundamentos del An\'{a}lisis Econ\'{o}mico, Universidad de
Alicante, Alicante 03071, Spain. E-mail: luis.guardiola@ua.es} , Ana Meca%
\thanks{
Operations Research Center. Universidad Miguel Hern\'{a}ndez, Edificio
Torretamarit. Avda. de la Universidad s.n. 03202 Elche (Alicante), Spain.
E-mail: ana.meca@umh.es} and Justo Puerto\thanks{%
Facultad de Matem\'{a}ticas, Universidad de Sevilla, 41012 Sevilla, SPAIN.
e-mail: puerto@us.es} \thanks{%
Corresponding author.}}
\title{Quid Pro Quo allocations\\
in Production-Inventory games \thanks{
The research of the authors is partially supported by Ministry of Econom{\'\i%
}a and Competitividad/FEDER grants numbers: {\color{blue}%
MTM2016-74983-C02-01, PGC2018-097965-B-I00.} }}
\maketitle

\begin{abstract} 
The concept of Owen point, introduced in Guardiola et al. (2009), is an appealing solution concept  that for Production-Inventory games (PI-games) always belongs to their core. The Owen point allows all the players in the game to operate at minimum cost but it does not take into account the cost reduction induced by essential players over their followers (fans). Thus,  it may be seen as an altruistic allocation for essential players what can be criticized. The aim this paper is two-fold: to study the structure and complexity of the core of PI-games and to introduce new core allocations for PI-games improving the weaknesses of the Owen point. Regarding the first goal, we advance further on the analysis of PI-games and we analyze its core structure and algorithmic complexity. Specifically, we prove that the number of extreme points of the core of PI-games is exponential on the number of players. On the other hand, we propose and characterize  a new core-allocation, the Omega point, which compensates the essential players for their role on reducing the costs of their fans. Moreover, we define another solution concept, the Quid Pro Quo set (QPQ-set) of allocations, which is based on the Owen and Omega points. Among all the allocations in this set, we emphasize what we call the \textit{Solomonic} QPQ allocation and we provide some necessary conditions for the coincidence of that allocation with the Shapley value and the Nucleolus.
\medskip

\textbf{Key words:} Production-Inventory games, core, Omega
point, Quid Pro Quo allocations

\textbf{2000 AMS Subject classification:} 91A12, 90B05
\end{abstract}

\newpage

\section{Introduction}

Guardiola et al. (2009) introduced Production-Inventory games (henceforth
PI-games) as a new class of totally balanced combinatorial optimization
games. That paper proposed the so-called
\textit{Owen point}  core-allocation that allows all players to operate at minimum cost at the price of not compensating  essential users by the cost reduction that they induce over the remaining players (fans). This allocation has proven to be rather appealing and in another paper, Guardiola et al. (2008) analyze its properties and propose three axiomatic
characterizations for the Owen point.
These papers also contribute to a better knowledge of the core of PI-games. Nevertheless, it was missing a deeper analysis of its complexity. Specifically speaking the two following aspects were not considered: testing core membership and the extreme points structure of the core of these games. Complexity issues in cooperative game theory raise important questions only partially answered for particular classes of games. The {core of} any convex game is the convex hull of its marginal vectors (Shapley 1971), and the same property holds true for those games satisfying the Co-Ma property which
include, among others, assignment and information games, see Hamers et al.
(2002) and Kuipers (1993) respectively. It is also well-known that the core
of assignment games coincide with the allocations induced by dual solutions
and it is a complete lattice with only two extreme points, see Sotomayor
(2003). Also, for transportation games, which constitute an extension of the
assignment games, some results about the relationship between the core and
the allocations induced by dual solutions are provided by S\'{a}nchez-Soriano et al. (2001).
Moreover, Perea et al. (2012) study cooperation situations in linear production
problems. In particular, that paper proposes a new solution concept called EOwen set as an improvement of the Owen set that contains at least one allocation that assigns a strictly positive payoff to players necessary for optimal production plans.

For minimum cost spanning tree games, flow games, linear
production games, cooperative facility location games or  min-coloring
games among others, testing whether a given allocation is in the core is an
NP-complete problem (see Faigle et al. (1997), Fang et al. (2002), Goemans and
Skutella (2004) and Deng et al. (1999), respectively). On the other hand,
there are some classes of games for which testing core membership is polynomially solvable as for instance for routing games, see Derks and Kuipers (1997), $s-t$ connectivity games, $r$-arborescence games, max matching games, min vertex cover games, min edge cover
games or max independent set games, see e.g., Deng et al. (1999). However, for many other classes of cooperative games answering that question is still open, as it is the case of PI-games.

In this paper we investigate the structure of the core of PI-games by determining its algorithmic complexity. Our contribution is to prove that testing core membership is an NP-complete problem and moreover that the number of extreme points of the core of PI-games is exponential on the number of players. Specifically, we characterize an exponential size subset of them. In addition, we look for alternative cost allocations improving the fairness properties of the Owen  point in that they recognize the role of the essential players on reducing the costs of the remaining players.

To present our results the rest of the paper is organized as follows. We start by introducing some preliminary
concepts in section \ref{preli}. In section \ref{structure} we prove that testing core membership of PI-games in an NP-complete problem, and we analyze the
core structure of PI-games. We define what we call the \textit{extreme functions}, which help us to prove that the core of a PI-game, in general, has an exponential number
of extreme points. In section \ref{structure copy 2} we introduce a new
core-allocation for PI-games, the Omega point, and provide an axiomatic characterization. Finally, in section \ref{QPQ set} we define the set of Quid Pro Quo
allocations (henceforth, QPQ allocations). Every QPQ allocation is a convex
combination of the Owen and the Omega point. We focus then on the equally
weighted QPQ allocation, the \textit{Solomonic} allocation, and we provide some
necessary conditions for the coincidence of the latter with the Shapley value
and the Nucleolus.

\section{\label{preli}Preliminaries}

A cost game with transferable utility (henceforth TU cost game) is a pair $%
(N,c)$, where $N=\left\{ 1,2,...,n\right\} $ is the finite set of players,
and the characteristic function $c:\mathcal{P}(N)\rightarrow \mathbb{R}$, is
defined over $\mathcal{P}(N)$ the set of nonempty coalitions of $N$. By
agreement, it always satisfies $c(\varnothing )=0.$ For all $S\subseteq N$,
we denote by $\left\vert S\right\vert $ the cardinal of the set $S$.

A distribution of the costs of the grand coalition, usually called
cost-sharing vector, is a vector $x\in \mathbb{R}^{N}$. For every coalition $%
S\subseteq N$ we denote by $x_{S}:=\sum_{i\in S}x_{i}$ the cost-sharing of
coalition $S$ (where $x_{\varnothing }=0).$ The core of a TU cost game
consists of those cost-sharing vectors $x$ which allocate the cost of the grand
coalition $N$ in such a way that no coalition $S$ has incentives to leave $N$ because $x(S)$ is smaller than the original cost of $S$, $c(S)$. Formally, the core of $(N,c)$
is given by $Core(N,c)=\left\{ x\in \mathbb{R}^{n}\left/ x_{N}=c(N)\text{
and }x_{S}\leq c(S)\text{ for all }S\subset N\right. \right\} .$ In the
following, core-allocations will be cost-sharing vectors belonging to the
core. A cost game $(N,c)$ is balanced if and only if has a nonempty core
(see Bondareva 1963 or Shapley 1967). Shapley and Shubik (1969) describe
totally balanced games as those games whose subgames are also balanced;
i.e., the core of every subgame is nonempty. A cost game $(N,c)$ is concave
if for all $i\in N$ and all $S,T\subseteq N$ such that $S\subseteq T\subset
N $ with $i\in S,$ then $c(S)-c(S\setminus \{i\})\geq c(T)-c(T\setminus
\{i\}). $

The Shapley value (Shapley, 1953) is a linear function on the class of all
TU games and for a cost game $(N,c)$ it is defined as $\phi (N,c)=(\phi
_{i}(N,c))_{i\in N}$ where for all $i\in N$

\begin{equation*}
\phi _{i}(N,c)=\sum\limits_{S\subseteq N\backslash \{i\}}\frac{s!(n-s-1)!}{n!%
}\cdot \left[ c(S\cup \{i\})-c(S)\right] .
\end{equation*}

The Nucleolus $\eta (N,c)$ (Schmeidler, 1969) is the allocation that
lexicographically minimizes the vector of excesses. It is well-known that the Nucleolus is a
core-allocation provided that the core is nonempty.

Let $Q$ be a bounded convex polyhedron in $\mathbb{R}^{n}$. We say that $%
x\in Q$ is an extreme point if $y,z\in Q$ and $x=\frac{1}{2}y+\frac{1}{2}z$
imply $y=z$. From now on, we denote, respectively, by $Ext\left( Q\right)$ and by $\partial
\left( Q\right)$ the set of extreme points and the boundary of the set of $Q$. Moreover, for the sake of readability, we use $e_{i}$ to refer to the $i$-th element of the canonical basis of $\mathbb{R}^{n}$ and $val(P)$ stands for the optimal value of the mathematical programming problem $P$.

It is well-known that $x\in Ext\left( Q\right) $ if
and only if $x$ satisfies as equalities at least $n$ linearly independent constraints of those defining $Q$. Since the core is a bounded
convex polyhedron, it has a finite number of extreme points. Moreover, the
core is a convex set. Therefore, characterizing the extreme core-allocations
is important to know its intrinsic structure.

From now on, and for the sake of readability, we follow the same notation as
Guardiola et al. (2009) to describe Production-Inventory situations
(henceforth: PI-situations) and PI-games. Consider first a situation with
several agents facing each one a Production-Inventory problem. Then, they
decide to cooperate to reduce costs. Here the cooperation is considered as
sharing technologies in production, inventory carrying and backlogged
demand. We mean that if a group of agents agree on cooperation then at each
period they will produce and pay inventory carrying and backlogged demand at
the cheapest costs among the members of the coalition. This situation is
called a PI-situation.

Formally, let $U$\ be an infinite set, the universe of players. A
PI-situation is a 3-tuple $(N,D,\Re )$ where $N\subset U$ is a finite set of
players $\left( \left\vert N\right\vert =n\right) $ and  $D$ an integer
matrix of demands with $D=[d^{1},\ldots ,d^{n}]^{\prime }$, $d^{i}=[d_{1}^{i},%
\ldots ,d_{T}^{i}]\geq 0$, $d_{t}^{i}$ is the demand of the player $i$ during
period $t\in T$ and $T$ is the planning horizon. In addition, $\Re =(H|B|P)$ is a cost matrix, so that $\quad H=[h^{1},\ldots ,h^{n}]^{\prime
}$, $B=[b^{1},\ldots ,b^{n}]^{\prime }$ and $P=[p^{1},\ldots
,p^{n}]^{\prime };$ where  $h^{i}=[h_{1}^{i},\ldots ,h_{T}^{i}]\geq 0$, $h_{t}^{i}$ is the unit inventory carrying costs of the player $i$ in period $t$, $%
b^{i}=[b_{1}^{i},\ldots ,b_{T}^{i}]\geq 0$, $b_{t}^{i}$ is the unit backlogging
carrying costs of the player $i$ in period $t$, and $p^{i}=[p_{1}^{i},\ldots
,p_{T}^{i}]\geq 0$, $p_{t}^{i}$ the unit production costs of the player $i$ in
period $t$, for $t=1,\ldots ,T$. The decision variables of the model, which
are required to be integer quantities, are the production during period $t$ (%
$q_{t}$), the inventory at hand at the end of period $t$ ($I_{t}$), and the
backlogged demand at the end of period $t$ ($E_{t}$). We denote by $\Upsilon $ the set of PI-situations $(N,D,\Re )$ defined over $U$, being $n\geq 1,T\geq 1$ and $D$ an integer
matrix.\medskip

Now given a PI-situation $(N,D,\Re )$, we can associate the corresponding TU
cost game $(N,c)$ with the following characteristic function $c$: $%
c(\varnothing )=0$ and for any $S\subseteq N,c(S)=val(PI(S))$, where $
PI(S) $ is given by
\begin{eqnarray*}
(PI(S))\quad &\min
&\sum_{t=1}^{T}(p_{t}^{S}q_{t}+h_{t}^{S}I_{t}+b_{t}^{S}E_{t}) \\
&\mbox{s.t.}&I_{0}=I_{T}=E_{0}=E_{T}=0, \\
&&I_{t}-E_{t}=I_{t-1}-E_{t-1}+q_{t}-d_{t}^{S},\quad t=1,\ldots ,T, \\
&&q_{t},\;I_{t},\;E_{t},\text{ non-negative, integer, }t=1,\ldots ,T;
\end{eqnarray*}%
with
\begin{equation*}
p_{t}^{S}=\min_{i\in S}\{p_{t}^{i}\},\;h_{t}^{S}=\min_{i\in
S}\{h_{t}^{i}\},\;b_{t}^{S}=\min_{i\in
S}\{b_{t}^{i}\},\;d_{t}^{S}=\sum_{i\in S}d_{t}^{i}.
\end{equation*}

Every TU cost game defined as above is called a Production-Inventory game.
Guardiola et al. (2009) points out that the problem $ PI(S)$
has integer optimal solutions provided that the demands are integer. We know
that the dual problem of $PI(S)$, for any coalition $S\subseteq N$, is the following
mathematical programming problem,
\begin{eqnarray*}
(DLPI(S))\quad &\max &\sum_{t=1}^{T}d_{t}^{S}y_{t}\mbox{   } \\
&\mbox{s.t.}&y_{t}\leq p_{t}^{S},\qquad \qquad t=1,\ldots ,T, \\
&&y_{t+1}-y_{t}\leq h_{t}^{S},\quad t=1,\ldots ,T-1, \\
&&-y_{t+1}+y_{t}\leq b_{t}^{S},\quad t=1,\ldots ,T-1.
\end{eqnarray*}

Moreover, Guardiola et al. (2009) also proves that an optimal solution of problem $%
DLPI(S)$ is $y_{t}^{\ast }(S)=\min \Big\{p_{t}^{S},\min_{k<t}%
\{p_{k}^{S}+h_{kt}^{S}\},\min_{k>t}\{p_{k}^{S}+b_{tk}^{S}\}\Big\}$, for all $%
t=1,\ldots ,T,$ with
\begin{eqnarray*}
p_{k}^{S} &=&\left\{
\begin{array}{cc}
p_{1}^{S} & \text{if }k<1, \\
p_{T}^{S} & \text{if }k>T,%
\end{array}%
\right. \\
h_{kt}^{S} &=&\sum_{r=k}^{t-1}h_{r}^{S},\quad \mbox{for any
}k<t,t=2,\ldots ,T;h_{k1}^{S}=0,k<1, \\
b_{tk}^{S} &=&\sum_{r=t}^{k-1}b_{r}^{S},\quad \mbox{ for any }%
k>t,\;t=1,\ldots ,T-1;b_{Tk}^{S}=0,k>T.
\end{eqnarray*}

It is important to note that those optimal solutions satisfy a monotonicity
property with respect to coalitions : $y_{t}^{\ast }(S)\geq y_{t}^{\ast }(R)$
for all $S\subseteq R\subseteq N$\ and all $t\in \{1,...,T\}$. Moreover, the
characteristic function of PI-games can be rewritten as follows: for any $%
\varnothing \neq S\subseteq N,c(S)=\sum_{t=1}^{T}d_{t}^{S}y_{t}^{\ast
}\left( S\right) $. \label{p:seis}

PI-games are not concave in general as shown by Example 4.4 in Guardiola et
al. (2009). The allocation $\left( \sum_{t=1}^{T}d_{t}^{i}y_{t}^{\ast
}\left( N\right) \right) _{i\in N}\hspace*{-0.3cm}=Dy^{\ast }(N)$ is called
the Owen point, and it is denoted by $Owen(N,D,\Re )$. At times, for the
sake of simplicity, we use $o$ to refer to the Owen point. That same paper also proves
that the Owen point is a core-allocation which can be reached
through a PMAS (Sprumont, 1990); hence every PI-game is a totally balanced
game. In some situations we will use $c^{(N,D,\Re )}(S)$ instead of $c(S)$,
in order to denote that the game $\left( N,c\right) $ comes from the
situation $(N,D,\Re )$.

 We say that a player $i\in N$ is essential if there exists $t\in
\{1,...,T\}$ with $d_{t}^{N\backslash \{i\}}>0$ such that $y_{t}^{\ast
}(N\backslash \{i\})>y_{t}^{\ast }(N)$. An essential player is the one for
which there exists at least one period in which he is needed by the rest of
players in order to produce a certain demand at a minimum cost. The set of
essential players is denoted by $\mathcal{E}$. Those players not being
essential are called inessential. We can easily check that for each
inessential player $i$,\ $o_{N\setminus\{i\}}=c(N\setminus \{i\})$.
Guardiola et al. (2009) showed that the core of PI-games shrinks to a single
point, the Owen point, just only when all players are inessential for the
PI-situation.

Finally, to conclude this section devoted to preliminaries, we recall the class of PS-games introduced by {Kar et al. (2009)}. A PS-game $(N,c)$ is a TU cost game satisfying that for all player $i\in N, $ there {exists} a real constant $c_{i\text{ }}$ such that $\Delta
_{i}(S)+\Delta _{i}(N\setminus \left( S\cup \{ i \}\right)=c_{i\text{ }}$ for
all $S\subseteq N\setminus \{i\},$ where $\Delta _{i}(S):=c(S\cup
\{i\})-c(S).$ The above mentioned paper proves that, for this class of games, the Shapley value and the Nucleolus coincide; i.e. $\phi (N,c)=\eta (N,c)$.

\section{\label{structure} Extreme points of the core of PI-games}

Guardiola et al. (2009) demostrated that the core of PI-games without
essential players ($\mathcal{E}=\varnothing $) shrinks to a singleton,
the Owen point. However, for those PI-games with essential players ($
\mathcal{E}\neq \varnothing $), the core is large. We focus here on those
PI-games with large cores and study the structure of its core by
analyzing its extreme points. First of all, we remark that testing core membership for PI-games cannot be done in polynomial time. One can adapt the
reduction proposed in Fang et al. (2002) to prove
that checking if an imputation belongs to the core of a PI-game is
an NP-complete decision problem. In spite of that, it is important to know the structure of the core and still very little is known about the extreme points complexity of PI-games. This is the goal of this section.
\smallskip

We begin this analysis by defining the essential player fan set.

Let $(N,D,\Re )$ be a PI-situation with $D$ being an integer matrix ($
(N,D,\Re )\in \Upsilon $), and let $i$ be an essential player. We define the
fan set of $i$\ as follows:
\begin{equation*}
F_{i}:=\{j\in N\backslash \{i\}\left\vert \exists t\in \{1,...,T\}\text{
with }d_{t}^{j}>0\text{ and }y_{t}^{\ast }(N\backslash \{i\})>y_{t}^{\ast
}(N)\}.\right.
\end{equation*}

The fan set of player $i$ consists of all players who need
him to operate at a lower cost. It is always a non-empty set.
Indeed, $F_{i}\neq \varnothing$ since taking $i\in \mathcal{E},$ there exists
$t^{\ast }\in \{1,...,T\}$ such that $y_{t^{\ast }}^{\ast }(N\backslash \{i\})>y_{t^{\ast }}^{\ast }(N)$
and $d_{t^{\ast }}^{N\backslash \{i\}}>0$. In that case, there must be, at least, a player
$j\in N\backslash \{i\}$ such that $d_{t^{\ast }}^{j}>0$ and $y_{t^{\ast }}^{\ast }(N\backslash \{i\})>y_{t^{\ast }}^{\ast }(N)$ .
\smallskip

In addition, you may notice that there is a pairwise relationship among essential players and their fans,
in the sense that the latter are interested in taking on a portion of the costs of the former.
This relationship allows us to introduce the concept of essential-fan pair.

Let $(N,D,\Re )\in \Upsilon $. The essential-fan pair set, denoted by $\mathbb{P}$, is:
\begin{equation*}
\mathbb{P}:=\{(i,j)|i\in \mathcal{E}\text{ and }j\in F_{i}\}.
\end{equation*}

We are now interested in determining the cost that can be transferred within every essential-fan pair with a cost  allocation;  i.e., the maximum portion of the essential player cost that his fan could assume while maintaining cooperation.

Given a essential-fan pair $p=\left( i,j\right) \in \mathbb{P}$ and a allocation $
x\in \mathbb{R}^{n},$ the transferred cost induced by $p$ regarding $x$ is:
\begin{equation*}
\alpha _{p}(x):=\min_{R\in \Delta _{p}}\{c(R)-x_{R}\},
\end{equation*}%
where
\begin{equation*}
\Delta _{(i,j)}:=\{R\subseteq N\backslash \{i\}\text{ such that }j\in R\}.
\end{equation*}

$\alpha _{p}(x)$ can be interpreted as the maximum portion of cost of player $i$ that can be awarded by player $j$ while maintaining the cooperation of the group.  It is worth nothing that if $x\in Core(N,c)$ then $\alpha _{p}(x)\geq 0$.
\smallskip

Next result states that there are always a positive transferred cost within every essential-fan pair with the
Owen point.

\begin{lemma}
\label{o}Let $(N,D,\Re )\in \Upsilon $ and $(N,c)$ be the corresponding
PI-game. Then $\alpha _{p}(o)>0$ for all $p\in \mathbb{P}$.
\end{lemma}

\begin{proof}
As $\mathcal{E}\neq \varnothing $, we can take $i\in \mathcal{E}$ and
therefore $F_{i}\neq \varnothing $. Let $R\subseteq N\setminus \{i{\}}$ such
that $R\cap F_{i}\neq \varnothing $ and let $j\in R\cap F_{i}$. By
definition, there exists $t^{\ast }\in \{1,...,T\}$ such that $y_{t^{\ast
}}^{\ast }(N\backslash \{i\})>y_{t^{\ast }}^{\ast }(N)$ and $d_{t^{\ast
}}^{j}>0$. Then $d_{t^{\ast }}^{R}>0$ and moreover $y_{t^{\ast }}^{\ast
}(N)<y_{t^{\ast }}^{\ast }(N\backslash \{i\})\leq y_{t^{\ast }}^{\ast }(R)$.
Thus, $o_{R}<c(R)$. Hence, $\alpha _{p}(o)=\min_{R\in \Delta
_{p}}\{c(R)-o_{R}\}>0$.
\end{proof}
\smallskip

We introduce now a function that transforms any cost allocation into a new cost allocation in which a fan player charges with the maximum cost of his essential player. That is, for each $p=(i,j)\in \mathbb{P}$, the function $f_{p}$
transforms any allocation $x$ into a new allocation $f_{p}(x),$ in which the fan player $j$ assumes as much cost as possible from his essential player $i.$ It is called the extreme function.

\begin{definition}[extreme function]
Let $(N,D,\Re )\in \Upsilon $ and $(N,c)$ be the corresponding PI-game. For
any $p=(i,j)\in \mathbb{P}$, the extreme function $f_{p}$ is defined by:
\begin{equation*}
f_{p}(x)=x+\wedge _{p}(x),
\end{equation*}%
where $x\in \mathbb{R}^{n}$ and $\wedge _{p}(x)=e_{j}\cdot \alpha
_{p}(x)-e_{i}\cdot \alpha _{p}(x).$
\end{definition}

Let us denote by $\mathbb{P}^{\left\vert \mathbb{P}\right\vert }$ the $%
\left\vert \mathbb{P}\right\vert $-fold cartesian product of the set $%
\mathbb{P}$. We consider now the composition of extreme functions.
For each $\sigma \in \mathbb{P}^{\left\vert \mathbb{P}\right\vert }$ we define
the extreme composite function, $F_{\sigma }$, as the composition of extreme
functions for all the pairs in $\sigma $, that is,
\begin{equation*}
F_{\sigma }(x):=\left( f_{\sigma _{\left\vert \mathbb{P}\right\vert }}\circ
f_{\sigma _{\left\vert \mathbb{P}\right\vert -1}}\circ ...\circ f_{\sigma
_{1}}\right) \left( x\right) .
\end{equation*}

Notice that if $\sigma =\left( p,p,...,p\right) \in \mathbb{P}^{\left\vert
\mathbb{P}\right\vert }$ then $F_{\sigma }(x)=f_{p}(x).$

\begin{example}
\label{example 1} The following table shows a PI-situation with three
periods and three players:

\begin{equation*}
\begin{tabular}{|c|c|c|c||c|c|c||c|c||c|c|}
\hline
& \multicolumn{3}{|c||}{Demand} & \multicolumn{3}{|c||}{Production} &
\multicolumn{2}{|c||}{Inventory} & \multicolumn{2}{|c|}{Backlogging} \\
\hline
P1 & 10 & 10 & 5 & 1 & 2 & 1 & 1 & 1 & 1 & 1 \\ \hline
P2 & 8 & 12 & 6 & 2 & 1 & 1 & 1 & 1 & 1 & 1 \\ \hline
P3 & 6 & 5 & 2 & 3 & 1 & 1 & 1 & 1 & 2 & 2 \\ \hline
\end{tabular}%
\end{equation*}

We can easily check that $c(S)=\displaystyle\sum_{t=1}^{3}p_{t}^{S}d_{t}^{S},$ for all $S\subseteq N$.
Hence, the characteristic function of the corresponding PI-game is given in the following table:

\begin{equation*}
\begin{array}{|c|c|c|c||c|c|c||c|c||c|c||c||}
\hline
& d_{1}^{_{S}} & d_{2}^{_{S}} & d_{3}^{_{S}} & p_{1}^{_{S}} & p_{2}^{_{S}} &
p_{3}^{_{S}} & h_{1}^{_{S}} & h_{2}^{_{S}} & b_{1}^{_{S}} & b_{2}^{_{S}} & c
\\ \hline
\{1\} & 10 & 10 & 5 & 1 & 2 & 1 & 1 & 1 & 1 & 1 & 35 \\ \hline
\{2\} & 8 & 12 & 6 & 2 & 1 & 1 & 1 & 1 & 1 & 1 & 36 \\ \hline
\{3\} & 6 & 5 & 2 & 3 & 1 & 1 & 1 & 1 & 2 & 2 & 25 \\ \hline
\{1,2\} & 18 & 22 & 11 & 1 & 1 & 1 & 1 & 1 & 1 & 1 & 51 \\ \hline
\{1,3\} & 16 & 15 & 7 & 1 & 1 & 1 & 1 & 1 & 1 & 1 & 38 \\ \hline
\{2,3\} & 14 & 17 & 8 & 2 & 1 & 1 & 1 & 1 & 1 & 1 & 53 \\ \hline
\{1,2,3\} & 24 & 27 & 13 & 1 & 1 & 1 & 1 & 1 & 1 & 1 & 64 \\ \hline
\end{array}%
\end{equation*}

In this example, $y^{\ast }(N)=\left( 1,1,1\right) $\ and $y^{\ast
}(N\setminus \{1\})=\left( 2,1,1\right) $\ are the optimal solution for $%
\left( DLPB(N)\right) $\ and $\left( DLPB(N\setminus \{1\})\right) ,$\
respectively. Then, the Owen point is $o=(25,26,13)$. Moreover, $\mathcal{E}%
=\{1\}$, $F_{1}=\{2,3\}$ and $\mathbb{P}=\{(1,2),(1,3)\}$.

The transferred cost within every essential-fan pair in $\mathbb{P}$ with the
Owen point are,
\begin{eqnarray*}
\alpha _{(1,2)}(o) &=&\min_{R\in \Delta _{(1,2)}}\{c(R)-o_{R}\}=10, \\
\alpha _{(1,3)}(o) &=&\min_{R\in \Delta _{(1,3)}}\{c(R)-o_{R}\}=12.
\end{eqnarray*}

Therefore, the extreme functions are
\begin{eqnarray*}
f_{(1,2)}(o) &=&o+\wedge _{(1,2)}(o)=(15,36,13), \\
f_{(1,3)}(o) &=&o+\wedge _{(1,3)}(o)=(13,26,25).
\end{eqnarray*}

In this case, both the extreme functions, $f_{(1,2)}(o)$, $f_{(1,3)}(o)$, and the Owen point, $o$,  are extreme points of the core.
\end{example}

The previous example shows that the Owen point is an extreme point of the core, and that the extreme functions transform it into other extreme points of the core. We wonder then if this fact occurs in general for any PI-game. First, we find a very interesting property that relates the extreme functions to the core boundary.

\begin{proposition}
Let $(N,D,\Re )\in \Upsilon $ and $(N,c)$ be the corresponding PI-game. For
all $p\in \mathbb{P}$%
\begin{equation*}
f_{p}(Core(N,c))\subseteq \partial {(Core(N,c)).}
\end{equation*}
\end{proposition}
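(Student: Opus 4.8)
The plan is to fix a pair $p=(i,j)\in\mathbb{P}$ and an arbitrary $x\in Core(N,c)$ and establish two things: that $f_p(x)$ is again a core allocation, and that it saturates at least one proper-coalition constraint, so that it cannot lie in the relative interior of the core and hence must sit on $\partial(Core(N,c))$. Writing $\alpha:=\alpha_p(x)$, I would first record the facts I will lean on: $\alpha\ge 0$ because $x\in Core(N,c)$, $i\ne j$ since $j\in F_i\subseteq N\setminus\{i\}$, and $f_p(x)=x+\alpha(e_j-e_i)$ transfers a cost $\alpha$ from the essential player $i$ to his fan $j$ while leaving every other coordinate unchanged.

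For core membership I would verify the defining inequalities by a case analysis on a coalition $S\subsetneq N$ according to whether it contains $i$, $j$, both, or neither. Efficiency is immediate since both $i$ and $j$ lie in $N$: $f_p(x)_N=x_N+\alpha-\alpha=c(N)$. When $\{i,j\}\subseteq S$ the two transfers cancel and $f_p(x)_S=x_S\le c(S)$; when $i\in S$ but $j\notin S$ we subtract and $f_p(x)_S=x_S-\alpha\le c(S)$ using $\alpha\ge 0$; and when neither belongs to $S$ nothing changes. The only case that uses the precise value of $\alpha$ is $j\in S,\ i\notin S$, i.e.\ $S\in\Delta_p$: here $f_p(x)_S=x_S+\alpha$, and the very definition $\alpha=\min_{R\in\Delta_p}\{c(R)-x_R\}$ yields $\alpha\le c(S)-x_S$, so $f_p(x)_S\le c(S)$. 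Thus $f_p(x)\in Core(N,c)$, and $\alpha$ is exactly the largest transfer onto $j$ compatible with the coalitions that contain $j$ but not $i$.

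To pin $f_p(x)$ to the boundary I would use a coalition $R^{\ast}\in\Delta_p$ attaining the minimum, so that $c(R^{\ast})-x_{R^{\ast}}=\alpha$. Since $j\in R^{\ast}$ and $i\notin R^{\ast}$, the transfer raises $R^{\ast}$'s share by exactly $\alpha$, giving
\[
f_p(x)_{R^{\ast}}=x_{R^{\ast}}+\alpha=\bigl(c(R^{\ast})-\alpha\bigr)+\alpha=c(R^{\ast}).
\]
Here $R^{\ast}$ is nonempty (it contains $j$) and proper (it misses $i\in N$), so $f_p(x)$ meets a genuine proper-coalition constraint with equality and therefore fails the strict inequalities $x_S<c(S)$ that characterize the relative interior of the core. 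Hence $f_p(x)\in\partial(Core(N,c))$, and as $x$ was arbitrary this gives $f_p(Core(N,c))\subseteq\partial(Core(N,c))$.

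The case analysis is routine; the step that needs care is the last one, where the relative-interior characterization is valid only for inequalities that are not implicit equalities of the core. I would close this gap with Lemma~\ref{o}: since $\alpha_p(o)>0$, every coalition $R\in\Delta_p$ satisfies $o_R<c(R)$ at the Owen point, so the constraint indexed by $R^{\ast}$ is strictly slack somewhere in the core and is therefore not an implicit equality. This is the only genuinely nontrivial ingredient, and it is precisely what guarantees that activating the $R^{\ast}$-constraint pushes $f_p(x)$ onto the relative boundary rather than merely keeping it on the always-tight efficiency hyperplane.
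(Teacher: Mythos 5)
Your proof is correct, and its first half is exactly the paper's argument: the same four-way case analysis on whether $S$ contains $i$, $j$, both, or neither, with the minimality in the definition of $\alpha_p(x)$ doing the work only in the case $S\in\Delta_p$, and efficiency following because the transfers cancel over $N$. Where you genuinely differ is the boundary step. The paper splits into two cases: when $\alpha_p(x)=0$ it simply notes that some constraint indexed by $\Delta_p$ is tight and concludes boundary membership; when $\alpha_p(x)>0$ it uses a perturbation argument, extending the segment from $x$ through $y=f_p(x)$ by a factor $1+\epsilon$ and checking that the point $x+(1+\epsilon)\wedge_p(x)$ violates the $R^{\ast}$ constraint, so $y$ cannot be interior. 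You instead give a single unified argument, valid for $\alpha_p(x)=0$ and $\alpha_p(x)>0$ alike, namely that $f_p(x)_{R^{\ast}}=c(R^{\ast})$, and then you address a point the paper leaves implicit: tightness of a constraint only expels a point from the \emph{relative} interior if that constraint is not an implicit equality of the core (the efficiency constraint, for instance, is tight at every core point, and the core has empty interior in $\mathbb{R}^n$, so the distinction matters). Your appeal to Lemma~\ref{o} --- $\alpha_p(o)>0$, hence every $R\in\Delta_p$ has slack at the Owen point, hence the $R^{\ast}$ constraint is not an implicit equality --- closes precisely the gap in the paper's $\alpha_p(x)=0$ case. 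So your route is a refinement: it buys uniformity across the two cases and rigor on the relative-boundary issue, at the price of importing Lemma~\ref{o}, which the paper's own proof of this proposition never invokes (its $\alpha_p(x)>0$ perturbation argument sidesteps implicit equalities automatically, since the perturbed points lie in the affine hull of the core but outside the core).
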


\begin{proof}
Let $p=(i,j)\in \mathbb{P}$ and take $x\in Core(N,c).$ Then $\alpha
_{p}(x)\geq 0$. Applying the extreme function $f_{p}$ at $x$, we have:
\begin{equation*}
f_{p}(x)=(x_{1},...,x_{i-1},x_{i}-\alpha
_{p}(x),x_{i+1},...,x_{j-1},x_{j}+\alpha _{p}(x),x_{j+1},...,x_{n}).
\end{equation*}

To prove that $y:=f_{p}(x)\in Core(N,c)$ we distinguish four possibilities:

\begin{itemize}
\item $i,j\in S.$ Then $y_{S}=x_{S}+\alpha _{p}(x)-\alpha _{p}(x)=x_{S}\leq
c(S)$.

\item $i,j\notin S.$ Then $y_{S}=x_{S}\leq c(S)$.

\item $i\notin S,j\in S.$ Then $y_{S}=x_{S}+\alpha _{p}(x)\leq
x_{S}+c(S)-x_{S}=c(S)$.

\item $i\in S,j\notin S.$ Then $y_{S}=x_{S}-\alpha _{p}(x)\leq x_{S}\leq
c(S) $.
\end{itemize}

Hence, $y\in Core(N,c)$ since $y_{S}\leq c(S)$ for any coalition $S\subseteq
N$. Let us proof now that $y$ belongs to the frontier of the core.

If $\alpha _{p}(x)=0$ then there exists $R\in \Delta _{p}$ such that $%
c(R)=x_{R}$. Since $y$ belongs to the core and satisfies as equality one of
the constraints defining the core, we can conclude that $y\in \partial {%
(Core(N,c))}$.

If $\alpha _{p}(x)>0$ then for all $\lambda \in (0,1),$ $(1-\lambda
)x+\lambda y\in Core(N,c).$ Take $\lambda =1+\epsilon $ with $\epsilon >0$
to have%
\begin{equation*}
(1-\lambda )x+\lambda y=-\epsilon x+(1+\epsilon )(x+\wedge
_{p}(x))=x+(1+\epsilon )\wedge _{p}(x).
\end{equation*}

We can check that if $R^{\ast }\in \Delta _{p}$ is such that $c(R^{\ast
})-x_{R^{\ast }}=\min_{R\in \Delta _{p}}\{c(R)-x_{R}\}$ then $x_{R^{\ast
}}+\alpha _{p}(x)=c(R^{\ast })$, therefore $x+(1+\epsilon )\wedge
_{p}(x)\notin Core(N,c)$. Hence, $y$ is not an interior point.
\end{proof}
\smallskip

It follows straightforward from the above proposition,  that $F_{\sigma
}(Core(N,c))\subseteq$ $\partial {(Core\left( N,c\right) )}$ for all $\sigma
\in \mathbb{P}^{\left\vert \mathbb{P}\right\vert }$.
\smallskip

The main Theorem of this Section provides a partial answer to our previous question
about the transformation of the Owen point into extreme points of the core of PI-games.
It states that for PI-situations with a single essential player, all the different compositions of extreme functions
over the Owen point generate extreme points of the core.

\begin{theorem}
Let $(N,D,\Re )\in \Upsilon $ and $(N,c)$ be the corresponding PI-game. If $%
\mathcal{E}=\{i\}$, then $F_{\sigma }(o)\in Ext\left( Core(N,c)\right) $ for
all $\sigma \in \mathbb{P}^{\left\vert \mathbb{P}\right\vert }$.
\end{theorem}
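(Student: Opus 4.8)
The plan is to show that for a PI-game with a single essential player $\mathcal{E}=\{i\}$, applying any composition $F_\sigma$ of extreme functions to the Owen point $o$ produces a point that satisfies at least $n$ linearly independent core constraints as equalities, which by the criterion recalled in Section~\ref{preli} is exactly what characterizes an extreme point. Since we already know from the preceding proposition that $F_\sigma(o)\in\partial(Core(N,c))\subseteq Core(N,c)$, the only task remaining is to produce the required bundle of $n$ binding, linearly independent constraints.

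\textbf{Key structural observation.} When $\mathcal{E}=\{i\}$, every pair in $\mathbb{P}$ has the \emph{same} essential player $i$; that is, $\mathbb{P}=\{(i,j)\mid j\in F_i\}$. First I would understand how $F_\sigma$ acts. Each $f_{(i,j)}$ transfers an amount $\alpha_{(i,j)}(\cdot)$ from coordinate $i$ to coordinate $j$, leaving all other coordinates untouched. Because only coordinate $i$ is ever decremented and the fans' coordinates only increase, I expect that the cumulative effect of $F_\sigma$ is: player $i$ ends with a reduced cost, each fan $j$ ends with $o_j$ plus the total amount transferred to $j$ across the applications of $f_{(i,j)}$ in $\sigma$, and every inessential non-fan keeps its Owen value. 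I would next verify the crucial fact that $\alpha_{(i,j)}(o)>0$ for each pair (this is Lemma~\ref{o}), and that each application of a transfer makes some coalition constraint tight while preserving feasibility (the previous proposition).

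\textbf{Exhibiting $n$ binding independent constraints.} The heart of the argument is to collect the tight constraints. First, the efficiency equation $x_N=c(N)$ is always binding and contributes one constraint whose coefficient vector is $\mathbf{1}$. Second, for each fan $j$ that actually receives a positive transfer during $F_\sigma$, the definition of $\alpha_{(i,j)}$ guarantees a coalition $R_j\in\Delta_{(i,j)}$ (so $j\in R_j$, $i\notin R_j$) on which the constraint becomes an equality; these give $|F_i|$ further binding constraints. The remaining constraints I would draw from the inessential players: for each inessential $k$ we have $o_{N\setminus\{k\}}=c(N\setminus\{k\})$, and since transfers only move mass between $i$ and its fans, I would check that these large-coalition constraints remain binding at $F_\sigma(o)$ for every inessential non-fan, and argue (or supplement with singleton/complement constraints for the fans) to reach a total of exactly $n$ constraints. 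The final step is to prove linear independence of the selected coefficient vectors: the efficiency row together with the coalition indicator vectors $\{\mathbf{1}_{R_j}\}$ and the inessential-player complement rows $\{\mathbf{1}_{N\setminus\{k\}}\}$. Here I would exploit that the $R_j$ are each allowed to be chosen minimally and that player $i$ lies in none of them but in all complement rows, giving a near-triangular structure when the rows are ordered appropriately.

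\textbf{Anticipated obstacle.} The routine part is checking feasibility and that the transfers behave additively; the genuine difficulty is the \emph{linear independence} of the $n$ binding constraints, because the minimizing coalitions $R_j$ are not canonical and could overlap in ways that collapse the rank. I expect to need a careful argument—likely that the $R_j$ can be taken so that their indicator vectors, augmented by the all-ones efficiency row and the complements of inessential players, form a matrix of full rank $n$; the single-essential-player hypothesis is exactly what makes this tractable, since all transfers share the common pivot coordinate $i$, allowing me to distinguish the ``giver'' row from the ``receiver'' rows. Pinning down the right $n$ coalitions and verifying independence is where the real work lies.
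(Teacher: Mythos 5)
Your strategy is viable and genuinely different in packaging from the paper's: you invoke the rank characterization of extremality (at least $n$ linearly independent tight constraints), whereas the paper argues directly from the midpoint definition, showing that $F_{\sigma }(o)=\frac{1}{2}y+\frac{1}{2}z$ with $y,z\in Core(N,c)$ forces $y=z$ by pinning coordinates one at a time: coordinates $k\neq i$ that receive no transfer are pinned by the tight constraints $x_{N\setminus \{k\}}\leq c(N\setminus \{k\})$ (tight at $o$ because, with $\mathcal{E}=\{i\}$, every $k\neq i$ is inessential), each positive-transfer fan is pinned by its minimizing coalition, and player $i$ by efficiency. These are exactly the three families of constraints you propose to use, so the ingredients coincide; your version buys a single global argument at the price of having to establish independence all at once rather than coordinate by coordinate.

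However, the decisive step of your plan, the linear independence, is left as an ``anticipated obstacle,'' and the observation that actually resolves it is absent; moreover your proposed mechanism (choosing the $R_{j}$ ``minimally'') is not the one that works. What closes the gap is a dynamic fact: once a coalition $R\in \Delta _{(i,j)}$ becomes tight, it stays tight under all later extreme-function applications, since a later transfer to a fan $l\notin R$ leaves $x_{R}$ unchanged (as $i\notin R$), while a later transfer to $l\in R$ must be zero because $R\in \Delta _{(i,l)}$ is tight, forcing $\alpha _{(i,l)}=0$. Two consequences follow: (a) each fan receives a positive transfer at most once along $\sigma $, so ordering the fans in $P$ (those with positive transfer) by transfer time is well defined; (b) if $j_{s}$ receives its positive transfer after $j_{r}$, then $j_{s}\notin R_{j_{r}}$ for \emph{any} minimizing coalition $R_{j_{r}}$, since otherwise $j_{s}$'s transfer would be zero. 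Consequence (b) makes the matrix with rows $\{\mathbf{1}_{R_{j}}\}_{j\in P}$ triangular with unit diagonal in that ordering, so no careful selection of the $R_{j}$ is needed, and independence of the whole system follows: evaluating any candidate vanishing combination at coordinate $i$ and subtracting kills the fan block by triangularity, then the rows $\mathbf{1}_{N\setminus \{k\}}$ and $\mathbf{1}_{N}$ decouple. Note also that your count of $n$ constraints needs the zero-transfer fans: for such $j$ the constraint $x_{N\setminus \{j\}}=c(N\setminus \{j\})$ remains tight at $F_{\sigma }(o)$ (transfers move mass only between $i$ and other fans), which together with the non-fans gives the $n-1-|P|$ missing rows. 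These persistence and blocking facts are precisely what the paper's sequential two-case analysis (whether or not the earlier minimizing coalition contains the next fan) encodes implicitly; with them supplied, your proof goes through.
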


\begin{proof}
Let $j\in F_{i}$ then the pair $p_{j}=(i,j)\in \mathbb{P}$. $f_{p_{j}}(o)$
is an extreme point if for any $y,z\in Core\left( N,c\right) $ such that

\begin{equation}
f_{p_{j}}(o)=\frac{1}{2}y+\frac{1}{2}z\mbox{ we have that }y=z.  \label{eq1}
\end{equation}

By definition, we know that%
\begin{equation*}
f_{p_{j}}(o)=(o_{1},...,o_{i-1},o_{i}-\alpha
_{p_{j}}(o),o_{i+1},...,o_{j-1},o_{j}+\alpha _{p_{j}}(o),o_{j+1},...,o_{n}).
\end{equation*}

Let us suppose that $z_{k}<o_{k}$ for any $k\neq i,j$ then $z_{N\setminus
\{k\}}>o_{N\setminus \{k\}}=c\left( N\setminus \{k\}\right) .$ However this
is not possible, therefore $y_{k},z_{k}\geq o_{k}$ for all $k\neq i,j$. Now,
apply (\ref{eq1}) to get that $y_{k}=z_{k}=o_{k}$ $\forall k\neq i,j$.
Moreover, $y_{j},z_{j}\leq o_{j}+\alpha _{p_{j}}(o)$ since $\alpha
_{p_{j}}(o)>0$ is the maximum possible increment for $o_{j}$ (see Lemma \ref%
{o}). Then by (\ref{eq1}) we have that $z=y$ and hence $f_{p_{j}}(o)\in
Ext\left( Core(N,c)\right) $.

Now, we consider $p_{l}=(i,l)\in \mathbb{P}$, and apply the corresponding
extreme function for this pair. We have that%
\begin{equation*}
f_{p_{l}}\left( f_{p_{j}}(o)\right) =(o_{1},...,o_{i}-\alpha
_{p_{j}}(o)-\alpha _{p_{l}}(f_{p_{j}}(o)),...,o_{l}+\alpha
_{p_{l}}(f_{p_{j}}(o)),...,o_{j}+\alpha _{p_{j}}(o),...,o_{n}).
\end{equation*}

We distinguish two possibilities:

\begin{enumerate}
\item $\alpha _{p_{j}}(o)$ attains its minimum in a coalition $R^{\ast }$
that contains player $l$. In this case $\alpha _{p_{l}}(f_{p_{j}}(o))=0$,
thus $f_{p_{l}}\left( f_{p_{j}}(o)\right) =f_{p_{j}}(o)$ and by the argument
above $f_{p_{l}}\left( f_{p_{j}}(o)\right) $ is an extreme point of $%
Core(N,c)$.

\item $\alpha _{p_{j}}(o)$ attains its minimum in a coalition $R^{\ast }$
that does not contain player $l$. This case implies that $\alpha
_{p_{l}}(f_{p_{j}}(o))>0$. Take $y,z\in Core(N,c)$ and assume that
\begin{equation}
f_{p_{l}}\left( f_{p_{j}}(o)\right) =\frac{1}{2}y+\frac{1}{2}z.  \label{eq2}
\end{equation}

Using the same argument as above we conclude that $y_{k}=z_{k}=o_{k}$ for
all $k\neq i,j,l.$ Consider now the $j$-th coordinate. Suppose that $%
z_{j}>o_{j}+\alpha _{p_{j}}(o)$. The coalition $R^{\ast }$ does not contain
neither $i$ nor $l$, which implies $c(R^{\ast })=o_{R^{\ast }}+\alpha
_{p_{j}}(o)<z_{R^{\ast }}.$ Since this is a contradiction, it means that $%
z_{j}\leq o_{j}+\alpha _{p_{j}}(o)$ (Notice that the same argument applies
to $y_{j}$\ and thus $y_{j}\leq o_{j}+\alpha _{p_{j}}(o)$). Therefore, by (%
\ref{eq2}) we get that $y_{j}=z_{j}=o_{j}+\alpha _{p_{j}}(o)$.

Next, consider the $l$-th coordinate. Assume that $z_{l}>o_{l}+\alpha
_{p_{l}}(f_{p_{j}}(o))$, and let $S^{\ast }$\ be the coalition where $\alpha
_{p_{l}}(f_{p_{j}}(o))$ attains its minimum, then $c(S)=\sum_{k\in S}\left(
f_{p_{j}}(o)\right) _{k}+\alpha _{p_{l}}(f_{p_{j}}(o))<z_{S}$. Again using
the same argument as in the $j$-th coordinate we conclude that $%
y_{l}=z_{l}=o_{l}+\alpha _{p_{l}}(f_{p_{j}}(o)).$

Finally, we get the same conclusion for the $i$-th coordinate since $%
f_{p_{l}}\left( f_{p_{j}}(o)\right) $ must be efficient. In conclusion $z=y.$
Hence, $f_{p_{l}}\left( f_{p_{j}}(o)\right) \in Ext\left( Core(N,c)\right) $.
Notice that $f_{p_{l}}\left( f_{p_{j}}(o)\right) $ is different from $%
f_{p_{j}}(o)$ since we have assumed that $\alpha _{p_{l}}(f_{p_{j}}(o))>0$.
\end{enumerate}

This construction can be repeated a finite number of times for each $p\in
\mathbb{P}$. Specifically, for any $\sigma \in \mathbb{P}^{\left\vert
\mathbb{P}\right\vert },$ the transformation $F_{\sigma }(o)\in Ext\left(
Core(N,c)\right) $.
\end{proof}

\begin{corollary}
Let $(N,D,\Re )\in \Upsilon $ with $\mathcal{E}=\{i\}$, and $(N,c)$ be the corresponding PI-game.
The Owen point is always an extreme point.
\end{corollary}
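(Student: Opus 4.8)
The plan is to give a direct proof from the definition of extreme point, mirroring the sandwiching argument already used in the proof of the preceding Theorem. The only input needed is the fact recalled in the Preliminaries: for every inessential player $k$ one has $o_{N\setminus\{k\}}=c(N\setminus\{k\})$. Since $\mathcal{E}=\{i\}$, every player $k\neq i$ is inessential, so this equality holds for all $n-1$ players different from $i$.

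First I would suppose $o=\frac{1}{2}y+\frac{1}{2}z$ for some $y,z\in Core(N,c)$ and fix any $k\neq i$. Because $y,z$ lie in the core they satisfy $y_{N\setminus\{k\}}\leq c(N\setminus\{k\})$ and $z_{N\setminus\{k\}}\leq c(N\setminus\{k\})$, while the convex combination gives $\frac{1}{2}y_{N\setminus\{k\}}+\frac{1}{2}z_{N\setminus\{k\}}=o_{N\setminus\{k\}}=c(N\setminus\{k\})$. Sandwiching then forces both inequalities to be equalities, i.e.\ $y_{N\setminus\{k\}}=z_{N\setminus\{k\}}=c(N\setminus\{k\})$ for every $k\neq i$.

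The second step passes from coalition sums to individual coordinates via efficiency. Since $y,z\in Core(N,c)$ we have $y_N=z_N=c(N)$, hence $y_k=y_N-y_{N\setminus\{k\}}=c(N)-c(N\setminus\{k\})=z_k$ for each $k\neq i$; thus $y$ and $z$ agree on all coordinates except possibly the $i$-th. Efficiency again yields $y_i=c(N)-\sum_{k\neq i}y_k=c(N)-\sum_{k\neq i}z_k=z_i$, so $y=z$ and therefore $o\in Ext(Core(N,c))$.

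There is essentially no hard step here; the statement is a clean specialization of the mechanism behind the Theorem (it is the ``base case'' in which no extreme function is applied). If one instead prefers the linear-independence characterization of extreme points recalled in the Preliminaries, the same facts show that the $n-1$ binding constraints $x_{N\setminus\{k\}}=c(N\setminus\{k\})$ with $k\neq i$, together with efficiency $x_N=c(N)$, are all tight at $o$; their normal vectors are the all-ones vector $\mathbf{1}$ and the vectors $\mathbf{1}-e_k$, which span $\mathbb{R}^n$ and are thus $n$ linearly independent constraints, giving the conclusion directly. The only point requiring a moment's care is verifying this independence, which is immediate once one notes that the differences $\mathbf{1}-(\mathbf{1}-e_k)$ recover each $e_k$ with $k\neq i$ and hence also $e_i=\mathbf{1}-\sum_{k\neq i}e_k$.
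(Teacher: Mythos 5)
Your proposal is correct and uses essentially the same mechanism as the paper's own (very terse) proof: the key fact that $o_{N\setminus\{k\}}=c(N\setminus\{k\})$ for every inessential $k\neq i$ sandwiches $y$ and $z$ against the core constraints so that $y_k=z_k=o_k$ for all $k\neq i$, and efficiency then pins down the $i$-th coordinate, which is exactly the argument the Corollary inherits from the preceding Theorem. Your alternative formulation via the $n$ tight constraints with normals $\mathbf{1}$ and $\mathbf{1}-e_{k}$, $k\neq i$, whose linear independence follows since their span contains every $e_{k}$, is also valid and arguably makes the extreme-point property more transparent, but it is the same set of binding inequalities viewed through the rank characterization rather than the convex-combination definition.
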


\begin{proof}
Take $y_{k},z_{k}\geq o_{k}$ for all $k\in N$, therefore $o=z=y$ and $o\in Ext\left( Core(N,c)\right).$
\end{proof}
\smallskip

At this point we know that PI-games with a single essential player have, at least,
$\left\vert \mathbb{P}\right\vert+1$ extreme points. Next example shows that
the core of a PI-game, in general, cannot be explicitly described
in polynomial time.

\begin{example}
\label{neco} Now we consider a PI-situation with n periods and n players:

\begin{equation*}
\begin{tabular}{|c|c|c|c|c||c|c|c|c||c|c|c|c||c|c|c|c|}
\hline
& \multicolumn{4}{|c||}{Demand} & \multicolumn{4}{|c||}{Production} &
\multicolumn{4}{|c||}{Inventory} & \multicolumn{4}{|c|}{Backlogging} \\
\hline
P1 & 1 & 1 & $\dots $ & 1 & $\frac{1}{n}$ & $\frac{1}{n}$ & $\dots $ & $%
\frac{1}{n}$ & 2 & 2 & $\dots $ & 2 & 2 & 2 & $\dots $ & 2 \\ \hline
P2 & 1 & 1 & $\dots $ & 1 & 1 & 1 & $\dots $ & 1 & 2 & 2 & $\dots $ & 2 & 2
& 2 & $\dots $ & 2 \\ \hline
$\vdots $ & $\vdots $ & $\vdots $ & $\ddots $ & $\vdots $ & $\vdots $ & $%
\vdots $ & $\ddots $ & $\vdots $ & $\vdots $ & $\vdots $ & $\ddots $ & $%
\vdots $ & $\vdots $ & $\vdots $ & $\ddots $ & $\vdots $ \\ \hline
Pn & 1 & 1 & $\dots $ & 1 & 1 & 1 & $\dots $ & 1 & 2 & 2 & $\dots $ & 2 & 2
& 2 & $\dots $ & 2 \\ \hline
\end{tabular}%
\end{equation*}

The corresponding PI-game is given by $c(S)=\displaystyle%
\sum_{t=1}^{n}p_{t}^{S}d_{t}^{S}$, for all $S\subseteq N$. Moreover, it is
easy to see that $\mathcal{E}=\{1\}$ and $F_{1}=\{2,3,...,n\}$. Then, we can
rewrite the characteristic function as follows:

\begin{equation*}
c(S)=\left\{
\begin{array}{cc}
\left\vert S\right\vert & \text{if }1\in S, \\
&  \\
n\cdot \left\vert S\right\vert & \text{if }1\notin S.%
\end{array}%
\right.
\end{equation*}

In this example, the Owen point is $o=(1,1,...,1)$. For all $i\in F_{1}$,
\begin{equation*}
\alpha _{(1,i)}(o)=\min_{R\in \Delta _{(1,i)}}\{c(R)-o_{R}\}=\min_{R\in
\Delta _{(1,i)}}\{n\cdot \left\vert R\right\vert -\left\vert R\right\vert
\}=n-1
\end{equation*}%
then%
\begin{equation*}
f_{(1,i)}(o)=(2-n,1,...,1,\underset{i}{\underbrace{n}},1,...,1).
\end{equation*}

For all $k\neq i,$ $k\in F_{1}$,
\begin{eqnarray*}
\alpha _{(1,k)}(f_{(1,i)}(o)) &=&\min_{R\in \Delta
_{(1,k)}}\{c(R)-\sum_{j\in R}\left( f_{(1,i)}(o)\right) _{j}\} \\
&=&\min_{R\in \Delta _{(1,k)}}\{\underset{\text{if }i\notin R}{\underbrace{%
n\cdot \left\vert R\right\vert -\left\vert R\right\vert }},\underset{\text{%
if }i\in R}{\underbrace{\left( n-1\right) \cdot \left\vert R\right\vert -n+1}%
}\}=n-1,
\end{eqnarray*}%
then%
\begin{equation*}
f_{(1,k)}\left( f_{(1,i)}(o)\right) =(3-2n,1,...,1,\underset{i}{\underbrace{n%
}},1,...,1,\underset{k}{\underbrace{n}},1,...,1).
\end{equation*}

Hence, we have as many extreme points as possible ways to place $"n"$ and $%
"1"$ in $n-1$ positions; i.e. in this example the core has $2^{n-1}+1$
extreme points.
\end{example}

Therefore, we can conclude that the cardinality of the extreme
points is exponential in the number of players. Hence, we cannot explicitly
describe the core of a PI-game in polynomial time.
\smallskip

We propose below an alternative core allocation to the Owen point that recognizes the role
played by essential players on reducing the cost of their fans.

\section{\label{structure copy 2} Omega point}

Guardiola et al. (2009) proposed the Owen point as a natural core allocation for PI-games that
arises when focusing on shadow prices of each period that each player must pay to meet their
demand in that period. It  makes it possible for all players in the joint venture to operate
at minimum cost.  If there is no essential player, the Owen point is the unique core
allocation. However, for those PI-situations with at least one essential player, the Owen point
reveals the altruistic character of them because of it does not take into account the role that these
essential players play in reducing the cost of their fans. As the core of the PI-games with essential
players is large, we are looking for a core allocation that motivates the essential players to continue
in the join venture obtaining a reduction in their demand costs in each period.
\smallskip

Let $(N,D,\Re )$ be a PI-situation with $D$ being an integer matrix ($
(N,D,\Re )\in \Upsilon $), and $\mathcal{E}\neq \varnothing $. Remember
that for all  $i\in\mathcal{E}$, there is a period $t^{\ast }\in \{1,...,T\}$
such that $y_{t^{\ast }}^{\ast }(N\backslash \{i\})>y_{t^{\ast }}^{\ast }(N)$
and there also exists at least one player $j\in N\backslash \{i\}$ such that
$d_{t^{\ast }}^{j}>0$. We denote by $\mathcal{E}^{t}$ and $F^{t}$ the sets of essential players and fans
for every period $t\in \{1,...,T\}$. We note in passing that $\mathcal{E}=\underset{t\in T}{\bigcup }\mathcal{E}^{t}$.
\smallskip

First, we consider the marginal contribution of the shadow prices of a player $i$ to the grand coalition $N$,
that is, $ y_{t}^{\ast }(N\backslash \{i\})-y_{t}^{\ast }(N).$  We then define the cost reduction that a player $i\in N$ can produce in another player $j\in N$ in a period $t$ as follows:
\begin{equation*}
q_{t}(i,j):=\left\{
\begin{array}{ccc}
\left( y_{t}^{\ast }(N\backslash \{i\})-y_{t}^{\ast }(N)\right) \cdot
d_{t}^{j} & if & i\neq j \\
0 & if & i=j\text{ }%
\end{array}%
\right.
\end{equation*}

The reader may notice that $q_{t}(i,j)>0$ only if $i\in \mathcal{E}^{t}$ and $j\in F^{t},$
otherwise $q_{t}(i,j)=0.$ That is to say that only essential players can reduce
their fan costs in a given period. Alternatively, the amount of the cost $q_{t}(i,j)$
can be interpreted as the maximum cost increase that a fan $j\in F^{t}$ is able to assume, in a certain period $t,$ to incentivize the essential player $i\in \mathcal{E}^{t}.$
\medskip

Next we define a new cost allocation rule, the Omega point, that considers the maximum cost increase mentioned above.

\begin{definition}[Omega point]
Let $(N,D,\Re )\in \Upsilon $ and $(N,c)$ be the corresponding PI-game.
The Omega point $\omega \in \mathbb{R}^{n}$ is defined as
$\omega _{i}=\sum_{t=1}^{T}\omega _{i}^{t}$ for all player $i\in N,$
where for each period $t=1,\ldots ,T$,
\begin{equation*}
\omega _{i}^{t}:=\left\{
\begin{array}{cc}
y_{t}^{\ast }(N)d_{t}^{i}+\overset{Q_{i}^{t}}{\overbrace{q_{t}(\mathcal{E}%
^{t},i)-\sum_{j\in F^{t}}q_{t}(i,j)}} & if\text{ \ }\left\vert \mathcal{E}%
^{t}\right\vert =1 \\
y_{t}^{\ast }(N)d_{t}^{i} & \text{otherwise}%
\end{array}%
\right. \text{ }
\end{equation*}
\end{definition}

The Omega point means that, in each of the periods with a single essential player, i.e. without competition, this essential player gets a cost reduction from his fans. The amount $Q_{i}^{t}$ represents the cost reduction or increase, depending on the sign, for player $i\in N$ in the period $t$. Notice that $Q_{i}^{t}<0,$ only if $i$ is an essential player, otherwise $Q_{i}^{t}\geq 0.$ In addition, $Q^{t}=\left( Q_{i}^{t}\right) _{i\in N}$ for all $t\in
\{1,...,T\}.$

The reader may also note that  $\omega =o+Q,$ where $Q\in \mathbb{R}^{n}$ with $Q_{i}=\sum_{t=1}^{T}Q_{i}^{t}.$  It is worth noting that $Q_{i}$ represents the marginal cost reduction or increase, of player $i$ to the rest players.  Moreover, $\sum_{i\in N}Q_{i}=0.$ In this setting, those players with $Q_{i}<0,$ would prefer the Omega point to the Owen point. On the contrary, those players with $Q_{i}>0$  would like the Owen point more.
\smallskip

The following example illustrates the cost reduction that the Omega point applies to essential players while increasing the cost of fans.

\begin{example}
In example \ref{example 1} the Owen point is $o=(25,26,13)$. Moreover, $%
\mathcal{E}^{1}=\{1\}$, $F^{1}=\{2,3\},\mathcal{E}^{2}=\mathcal{E}%
^{3}=\varnothing $ . The cost reduction for the essential player $1$ from his fans $2$ and $3$ are:

\begin{equation*}
q_{1}(1,2)=8;q_{1}(1,3)=6;
\end{equation*}

Therefore,
\begin{eqnarray*}
\omega _{1} &=&o_{1}-q_{1}(1,2)-q_{1}(1,3)=25-8-6=11 \\
\omega _{2} &=&o_{2}+q_{1}(1,2)=34 \\
\omega _{3} &=&o_{3}+q_{1}(1,3)=19
\end{eqnarray*}

In this case $\omega =(11,34,19)=(25,26,13)+Q$, with $Q=(-14,8,6)$.
It is also a core-allocation. Note that player 1 obtains a cost reduction of 14
units, while players 2 and 3 are increasing their costs by 8 and 6 units, respectively.
Here, the Omega point is a core-allocation that recognizes the essential role of player 1
through a cost reduction assumed by his fans. Next we demonstrate that this always holds
for any PI-game.
\end{example}

\begin{proposition}
\label{wcore}Let $(N,D,\Re )\in \Upsilon $ and $(N,c)$ be the corresponding
PI-game. The Omega point is a core-allocation.
\end{proposition}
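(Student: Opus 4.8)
The plan is to verify the two defining conditions of the core separately: efficiency $\omega_N = c(N)$ and coalitional rationality $\omega_S \le c(S)$ for every $S \subset N$. Efficiency is essentially immediate: since $\omega = o + Q$ with $\sum_{i\in N} Q_i = 0$, and the Owen point is efficient ($o_N = c(N)$), we get $\omega_N = o_N + \sum_{i\in N}Q_i = c(N)$. The whole difficulty therefore lies in the coalitional inequalities, and for these I would exploit the additive structure of $\omega$ across periods together with the identity $c(S) = \sum_{t=1}^T d_t^S y_t^*(S)$.

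The key reduction is to prove the period-wise inequality $\omega_S^t \le d_t^S y_t^*(S)$ for every $S$ and every $t$, where $\omega_S^t := \sum_{i\in S}\omega_i^t$; summing over $t$ then yields $\omega_S \le c(S)$. For periods $t$ with $|\mathcal{E}^t| \neq 1$ this is trivial, because $\omega_i^t = y_t^*(N) d_t^i$ gives $\omega_S^t = y_t^*(N) d_t^S \le y_t^*(S) d_t^S$ by the monotonicity property $y_t^*(S) \ge y_t^*(N)$ valid for $S \subseteq N$. The substantive case is a period $t$ with a single essential player $\mathcal{E}^t = \{e\}$. Writing $\delta_t := y_t^*(N\setminus\{e\}) - y_t^*(N) > 0$, I would first record the explicit values $Q_i^t = \delta_t d_t^i$ for $i \neq e$ (using that $q_t(i,j)=0$ whenever $i \notin \mathcal{E}^t$) and $Q_e^t = -\delta_t d_t^{N\setminus\{e\}}$. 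From these, $\sum_{i\in S} Q_i^t = \delta_t d_t^S$ when $e \notin S$, and $\sum_{i\in S} Q_i^t = \delta_t\big(d_t^{S\setminus\{e\}} - d_t^{N\setminus\{e\}}\big) \le 0$ when $e \in S$, the sign following from $S\setminus\{e\} \subseteq N\setminus\{e\}$ and nonnegativity of the demands.

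The main obstacle is precisely the case $e \notin S$, where the fans in $S$ genuinely raise their share and $\sum_{i\in S}Q_i^t > 0$, so a naive comparison against $y_t^*(N)$ is insufficient. The resolution is that the increment combines exactly with the base term: $\omega_S^t = y_t^*(N) d_t^S + \delta_t d_t^S = y_t^*(N\setminus\{e\}) d_t^S$, and now, since $e \notin S$ forces $S \subseteq N\setminus\{e\}$, monotonicity yields $y_t^*(N\setminus\{e\}) \le y_t^*(S)$, whence $\omega_S^t \le y_t^*(S) d_t^S$. In the remaining case $e \in S$ the correction $\sum_{i\in S}Q_i^t$ is nonpositive, so $\omega_S^t \le y_t^*(N) d_t^S \le y_t^*(S) d_t^S$ directly. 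Collecting the three cases establishes the period-wise bound; summing over $t$ completes the proof of $\omega_S \le c(S)$, and together with efficiency this shows $\omega \in Core(N,c)$.
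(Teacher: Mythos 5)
Your proof is correct and follows essentially the same route as the paper: both establish the period-wise inequality $\omega_S^t \le y_t^*(S)\,d_t^S$ by splitting into the cases $\lvert\mathcal{E}^t\rvert \neq 1$, $e \in S$, and $e \notin S$, using the monotonicity $y_t^*(R) \ge y_t^*(S)$ for $R \subseteq S$, and then sum over $t$. Your bookkeeping via $Q_i^t = \delta_t d_t^i$ merely streamlines the paper's explicit split of sums over $S \cap F^t$ and $S \setminus F^t$, but the argument is the same.
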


\begin{proof}
Consider any period $t$ and a coalition $S\subseteq N$. If $t$ does not have
essential players or has more than one, then $\omega _{S}^{t}=\sum_{i\in
S}y_{t}^{\ast }(N)d_{t}^{i}\leq \sum_{i\in S}y_{t}^{\ast
}(S)d_{t}^{i}=y_{t}^{\ast }(S)d_{t}^{S}$.
\medskip

Otherwise, suppose that player $k$ is essential in the period $t$ ($\mathcal{%
E}^{t}=\{k\}$ ). we distinguish two possibilities:

\begin{itemize}
\item $k$ $\in S$, then%
\begin{eqnarray*}
\omega _{S}^{t} &=&\omega _{k}^{t}+\sum_{i\in S\cap F^{t}}\omega
_{i}^{t}+\sum_{i\in S\backslash F^{t}}\omega _{i}^{t}=y_{t}^{\ast
}(N)d_{t}^{k}-\sum_{j\in F^{t}}q_{t}(k,j) \\
&&+\sum_{i\in S\cap F^{t}}\left( y_{t}^{\ast }(N)d_{t}^{i}+q_{t}(k,i)\right)
+\sum_{i\in S\backslash F^{t}}y_{t}^{\ast }(N)d_{t}^{i} \\
&=&y_{t}^{\ast }(N)d_{t}^{S}+\sum_{i\in S\cap F^{t}}q_{t}(k,i)-\sum_{j\in
F^{t}}q_{t}(k,j) \\
&=&y_{t}^{\ast }(N)d_{t}^{S}-\sum_{j\in F^{t}\backslash S}q_{t}(k,j)\leq
y_{t}^{\ast }(N)d_{t}^{S}\leq y_{t}^{\ast }(S)d_{t}^{S}
\end{eqnarray*}%
\newline

\item $k$ $\notin $ $S$, then%
\begin{eqnarray*}
\omega _{S}^{t} &=&\sum_{i\in S\cap F^{t}}\omega _{i}^{t}+\sum_{i\in
S\backslash F^{t}}\omega _{i}^{t}=\sum_{i\in S\cap F^{t}}\left( y_{t}^{\ast
}(N)d_{t}^{i}+q_{t}(k,i)\right) +\sum_{i\in S\backslash F^{t}}y_{t}^{\ast
}(N)d_{t}^{i} \\
&=&y_{t}^{\ast }(N)d_{t}^{S}+\sum_{i\in S\cap F^{t}}\left( y_{t}^{\ast
}(N\backslash \{k\})-y_{t}^{\ast }(N)\right) d_{t}^{i} \\
&=&\sum_{i\in S\backslash F^{t}}y_{t}^{\ast }(N)d_{t}^{i}+\sum_{i\in S\cap
F^{t}}y_{t}^{\ast }(N\backslash \{k\})d_{t}^{i} \\
&\leq &\sum_{i\in S}y_{t}^{\ast }(N\backslash \{k\})d_{t}^{i}\leq \sum_{i\in
S}y_{t}^{\ast }(S)d_{t}^{i}=y_{t}^{\ast }(S)d_{t}^{S}
\end{eqnarray*}
\end{itemize}

Hence, $\omega _{S}^{t}\leq y_{t}^{\ast }(S)d_{t}^{S}$ for all $t\in T$.
Then, $\omega _{S}=\sum_{t=1}^{T}\sum_{i\in S}\omega
_{i}^{t}=\sum_{t=1}^{T}\omega _{S}^{t}\leq \sum_{t=1}^{T}y_{t}^{\ast
}(S)d_{t}^{S}=c(S)$ for any coalition $S\subseteq N$. Moreover, $\omega
_{N}=o_{N}+\sum_{i\in N}Q_{i}=o_{N}=c(N).$ Therefore $\omega \in Core(N,c)$.
\end{proof}

\subsection{\protect\bigskip Characterization of the Omega point}

To complete the study of the Omega point, we here propose an axiomatic characterization based on a set of desirable properties that make it unique. In order to do that, we denote by $\varphi $ a
generic allocation rule on $\Upsilon $ and consider the following properties,  some of which have been used in the literature to axiomatize alternative allocations:

\begin{itemize}
\item[(EF)] \textit{Efficiency}. For all $x\in \varphi (N,D,\Re )$ and for
any PI-game $(N,D,\Re )\in \Upsilon $, $x_{N}=c^{(N,D,\Re )}(N)$.

\item[(NE)] \textit{Nonemptiness}. For any PI-game $(N,D,\Re )\in \Upsilon $%
, $\varphi (N,D,\Re )\neq \varnothing $.

\item[(IBC)] \textit{Inessential {bounded} cost}. For any PI-game $(N,D,\Re
)\in \Upsilon $ and for all $x\in \varphi (N,D,\Re )$, if $i$ is an
inessential player for $(N,D,\Re )$, then $x_{i}\leq
\sum_{t=1}^{T}y_{t}^{\ast }(N\backslash \mathcal{E}^{t})d_{t}^{i}$

\item[(TI)] \textit{Tyranny}. For all $(N,D,\Re )\in \Upsilon $ and for all $%
x\in \varphi (N,D,\Re )$, if $k$ is a single essential player then $%
x_{N\setminus \{k\}}=c^{(N,D,\Re )}(N\setminus \{k\})$.

\item[(ACP)] \textit{{Additive} combination of periods' demands}. For all $%
(N,D,\Re )\in \Upsilon $ and for all $x\in \varphi (N,D,\Re )$, there exists
$(z_{t})_{t\in T}\in (\mathbb{R}^{N})^{N}$ such that $x=\sum_{t=1}^{T}z_{t}$
and for all $t\in T$, $z_{t}\in \varphi (N,D_{t},\Re )$ if $\left\vert
\mathcal{E}^{t}\right\vert \leq 1$ and $z_{t}=Owen(N,D_{t},\Re )$ otherwise,
where
\begin{equation}
D_{t}=\left( d^{ip}\right) _{\substack{ i=1,...,n  \\ p=1,...,T}},\text{ }%
d^{ip}=\left\{
\begin{array}{cc}
d_{p}^{i} & \text{if }t=p, \\
0 & \text{otherwise.}%
\end{array}%
\right.  \label{eq3}
\end{equation}
\end{itemize}

The first two properties were already used in Guardiola et al. (2008), among
many other papers, to characterize the Owen point solution, and they are also
important to our new characterization of the Omega point. Recall that
\textit{Efficiency} ensures that the total cost of any PI-situation is
entirely allocated among the players. Analogously, \textit{Nonemptiness}
guarantees that this allocation rule always return a feasible allocation of
the overall cost when applied to any PI-situation. \textit{Inessential {%
bounded} cost} imposes a maxim cost for every {inessential} player in
situations which an essentials players has left. \textit{Tyranny} implies
that a single essential player will assert all his power over the rest so
that they assume the maximum possible cost.
\smallskip

Finally, an allocation rule satisfies the property of \textit{Additivity
combination of periods' demands} if it is additive with respect to the
demand of the periods that has at most an essential player plus the Owen
point of those periods with more than one essential player.
We emphasize that this additivity results from the following
relationship $c^{(N,D,\Re )}=\sum_{t=1}^{T}c^{(N,D_{t},\Re )}$\textbf{\ }for
all\textbf{\ }$(N,D,\Re )\in \Upsilon .$ Thus, we are interested on allocation
rules, for PI-situations, compatible with this form of distribution of their
demands.
\medskip

First, we prove that the Omega point satisfies all the properties mentioned above.

\begin{proposition}
\label{con} The Omega point defined on the set $\Upsilon ,$ satisfies EF, NE, IBC, TI and ACP.
\end{proposition}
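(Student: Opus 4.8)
The plan is to exploit that here $\varphi$ denotes the single-valued Omega rule, so that \emph{Nonemptiness} is immediate---$\varphi(N,D,\Re)=\{\omega\}\neq\varnothing$ because $\omega$ is a well-defined vector---and \emph{Efficiency} is already contained in Proposition~\ref{wcore}, where $\omega_N=o_N+\sum_{i\in N}Q_i=o_N=c(N)$. The real work lies in IBC, TI and ACP, and for all three I would argue period by period, using the decomposition $\omega_i=\sum_{t=1}^T\omega_i^t$ together with the monotonicity $y_t^{\ast}(S)\geq y_t^{\ast}(R)$ for $S\subseteq R$ and the identity $c(S)=\sum_{t=1}^T d_t^S y_t^{\ast}(S)$.

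For IBC and TI the key computation is that whenever $k\neq i$ one has, by definition of $q_t$, $q_t(k,i)=\bigl(y_t^{\ast}(N\setminus\{k\})-y_t^{\ast}(N)\bigr)d_t^i$. Hence if $i$ is inessential and $\mathcal{E}^t=\{k\}$ (so $k\neq i$, and the term $\sum_{j\in F^t}q_t(i,j)$ vanishes because $i\notin\mathcal{E}^t$ forces $q_t(i,j)=0$), then $\omega_i^t=y_t^{\ast}(N)d_t^i+q_t(k,i)=y_t^{\ast}(N\setminus\{k\})d_t^i=y_t^{\ast}(N\setminus\mathcal{E}^t)d_t^i$; if $\mathcal{E}^t=\varnothing$ then $N\setminus\mathcal{E}^t=N$ gives the same equality; and if $|\mathcal{E}^t|\geq 2$ then $\omega_i^t=y_t^{\ast}(N)d_t^i\leq y_t^{\ast}(N\setminus\mathcal{E}^t)d_t^i$ by monotonicity. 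Summing over $t$ yields IBC. For TI I would specialize to $\mathcal{E}=\{k\}$: then every period satisfies $|\mathcal{E}^t|\leq 1$, every $i\neq k$ is inessential, and (using $y_t^{\ast}(N)=y_t^{\ast}(N\setminus\{k\})$ precisely when $k\notin\mathcal{E}^t$) the previous display gives $\omega_i^t=y_t^{\ast}(N\setminus\{k\})d_t^i$ in all periods. Summing over $i\in N\setminus\{k\}$ and over $t$ gives $\omega_{N\setminus\{k\}}=\sum_{t=1}^T y_t^{\ast}(N\setminus\{k\})d_t^{N\setminus\{k\}}=c(N\setminus\{k\})$, which is TI.

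For ACP the natural choice is $z_t:=(\omega_i^t)_{i\in N}$, so that $\omega=\sum_{t=1}^T z_t$ holds by construction; what must be checked is that each $z_t$ is the allocation prescribed for the single-period situation $(N,D_t,\Re)$. The crucial observation is that the dual optimum $y^{\ast}$ depends only on the cost matrix $\Re$ and not on the demands, so $y_p^{\ast}(S)$ is unchanged in $(N,D_t,\Re)$; moreover, since $D_t$ carries demand only in period $t$, the only period of $(N,D_t,\Re)$ with essential players is $t$, and its essential and fan sets are exactly $\mathcal{E}^t$ and $F^t$. Consequently every period-$p$ component of $\Omega(N,D_t,\Re)$ with $p\neq t$ vanishes (zero demand), while its period-$t$ component reproduces $\omega_i^t$ verbatim. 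Thus $\Omega(N,D_t,\Re)=z_t$ when $|\mathcal{E}^t|\leq 1$, whereas when $|\mathcal{E}^t|\geq 2$ one has $Owen_i(N,D_t,\Re)=y_t^{\ast}(N)d_t^i=\omega_i^t=(z_t)_i$; this is exactly what ACP demands.

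The main obstacle is ACP, where one must read off the essential/fan structure and the Omega and Owen points of the restricted situations $(N,D_t,\Re)$; once it is recognized that $y^{\ast}$ is demand-free and that restricting demand to period $t$ leaves $\mathcal{E}^t$ and $F^t$ intact, the verification collapses to matching period-$t$ slices. IBC and TI, by contrast, reduce to the single algebraic identity for $q_t(k,i)$ combined with monotonicity, and EF and NE are essentially bookkeeping.
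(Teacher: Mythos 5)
Your proposal is correct and essentially reproduces the paper's own proof: EF and NE follow from core membership (Proposition \ref{wcore}), IBC and TI from the period-wise identity $y_t^{\ast}(N)d_t^{i}+q_t(k,i)=y_t^{\ast}(N\setminus\{k\})d_t^{i}$ combined with dual monotonicity, and ACP from the slices $z_t=(\omega_i^{t})_{i\in N}$ together with the observation (implicit in the paper) that $y^{\ast}$ is demand-free, so that $\omega_i^{t}(N,D,\Re)=\omega_i(N,D_t,\Re)$ when $\left\vert\mathcal{E}^{t}\right\vert\leq 1$ and $\omega_i^{t}(N,D,\Re)=Owen_i(N,D_t,\Re)$ otherwise. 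One small wording slip in your TI step: $k\notin\mathcal{E}^{t}$ does \emph{not} imply $y_t^{\ast}(N)=y_t^{\ast}(N\setminus\{k\})$ (the other alternative is $d_t^{N\setminus\{k\}}=0$), but the equality you actually use, $\omega_i^{t}=y_t^{\ast}(N\setminus\{k\})d_t^{i}$ for $i\neq k$, holds in either case, since both sides vanish when those demands are zero.
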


\begin{proof}
For any PI situation $(N,D,\Re )\in \Upsilon $ we know by proposition \ref%
{wcore} that $\omega (N,D,\Re )\in Core(N,c)$ by . Hence, the Omega point
verifies the properties of EF and NE. An inessential player satisfy IBC
since if for all $i\in N$
\begin{eqnarray*}
\omega _{i}(N,D,\Re ) &=&\sum_{t=1}^{T}\omega _{i}^{t}(N,D,\Re
)=\sum_{t=1}^{T}\omega _{i}(N,D_{t},\Re ) \\
&=&\sum_{t\in T/\left\vert \mathcal{E}^{t}\right\vert =1}\left( y_{t}^{\ast
}(N)d_{t}^{i}+q_{t}(\mathcal{E}^{t},i)\right) +\sum_{t\in T/\left\vert
\mathcal{E}^{t}\right\vert \neq 1}y_{t}^{\ast }(N)d_{t}^{i} \\
&=&\sum_{t\in T/\left\vert \mathcal{E}^{t}\right\vert =1}y_{t}^{\ast
}(N\backslash \mathcal{E}^{t})d_{t}^{i}+\sum_{t\in T/\left\vert \mathcal{E}%
^{t}\right\vert \neq 1}y_{t}^{\ast }(N)d_{t}^{i}\leq
\sum_{t=1}^{T}y_{t}^{\ast }(N\backslash \mathcal{E}^{t})d_{t}^{i}
\end{eqnarray*}

\noindent if there is only one essential player $k$, then:
\begin{eqnarray*}
\sum_{t=1}^{T}\omega _{N\setminus \{k\}}^{t} &=&\sum_{t=1}^{T}\left(
y_{t}^{\ast }(N)d_{t}^{N\setminus \{k\}}+\sum_{j\in N\setminus
\{k\}}q_{t}(k,i)\right) \\
&=&\sum_{t=1}^{T}\left( y_{t}^{\ast }(N)d_{t}^{N\setminus \{k\}}+\sum_{j\in
N\setminus \{k\}}\left( y_{t}^{\ast }(N\backslash \{k\})-y_{t}^{\ast
}(N)\right) d_{t}^{j}\right) \\
&=&\sum_{t=1}^{T}\left( y_{t}^{\ast }(N\backslash \{k\})d_{t}^{N\setminus
\{k\}}\right) =c^{(N,D,\Re )}(N\setminus \{k\}).
\end{eqnarray*}%
Then satisfy {TI}. Finally, considering $D_{t}$ as it was already defined in
(\ref{eq3}), we obtain that $\sum_{t=1}^{T}D_{t}=D$ and
\begin{eqnarray*}
\omega (N,D,\Re ) &=&\left( \sum_{t=1}^{T}\omega _{i}^{t}(N,D,\Re )\right)
_{i\in N}=\sum_{t=1}^{T}\left( \omega _{i}(N,D_{t},\Re )\right) _{i\in N} \\
&=&\sum_{t\in T/\left\vert \mathcal{E}^{t}\right\vert \leq 1}\omega
_{i}(N,D_{t},\Re )+\sum_{t\in T/\left\vert \mathcal{E}^{t}\right\vert \geq
2}Owen(N,D_{t},\Re )
\end{eqnarray*}%
Hence, the Omega point satisfies ACP.
\end{proof}

\medskip
Second, we focus on PI-situations without essential players and show that, in this setting,
the Omega point matches the Owen point, and both can be characterized by using only three of the previous properties.

\begin{proposition}
\label{con2} Let $(N,D,\Re )\in \Upsilon $ be a PI situation with $\left\vert \mathcal{E}\right\vert
=0$. Then, $\varphi (N,D,\Re )=\omega (N,D,\Re )=Owen(N,D,\Re )$ if and only
if $\varphi $ satisfies NE, EF and IBC.
\end{proposition}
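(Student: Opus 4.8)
The plan is to separate the two equalities in the statement from the biconditional. The equality $\omega(N,D,\Re)=Owen(N,D,\Re)$ is immediate once one observes that $\mathcal{E}=\bigcup_{t}\mathcal{E}^{t}$ together with $|\mathcal{E}|=0$ forces $\mathcal{E}^{t}=\varnothing$ for every period $t$; in particular no period has $|\mathcal{E}^{t}|=1$, so the definition of the Omega point gives $\omega_{i}^{t}=y_{t}^{\ast}(N)d_{t}^{i}$ for all $i$ and $t$, and hence $\omega_{i}=\sum_{t=1}^{T}y_{t}^{\ast}(N)d_{t}^{i}=o_{i}$. It then remains to establish that, on situations with no essential players, $\varphi(N,D,\Re)=Owen(N,D,\Re)$ holds exactly when $\varphi$ satisfies NE, EF and IBC.

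For the ``only if'' direction I would simply appeal to Proposition \ref{con}. That result already shows the Omega point satisfies EF, NE and IBC throughout $\Upsilon$, and we have just seen that $\omega$ and $Owen$ coincide whenever $\mathcal{E}=\varnothing$; consequently the rule $\varphi=Owen$ inherits all three properties on this subclass.

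The substantive part is the ``if'' direction. Assume $\varphi$ satisfies NE, EF and IBC, and pick any $x\in\varphi(N,D,\Re)$, which is legitimate because NE guarantees the set is nonempty. The crucial simplification is that, with $\mathcal{E}=\varnothing$, every player is inessential and $N\setminus\mathcal{E}^{t}=N$ for each $t$, so the bound supplied by IBC collapses to
\begin{equation*}
x_{i}\leq\sum_{t=1}^{T}y_{t}^{\ast}(N\setminus\mathcal{E}^{t})d_{t}^{i}=\sum_{t=1}^{T}y_{t}^{\ast}(N)d_{t}^{i}=o_{i}\qquad\text{for every }i\in N.
\end{equation*}
On the other hand EF and the efficiency of the Owen point give $\sum_{i\in N}x_{i}=c(N)=o_{N}=\sum_{i\in N}o_{i}$. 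A coordinatewise upper bound that sums to equality must be attained with equality in every coordinate: were $x_{k}<o_{k}$ for some $k$, the totals could not match. Hence $x_{i}=o_{i}$ for all $i$, that is $x=o$, and since $x$ was an arbitrary element of the nonempty set $\varphi(N,D,\Re)$ we conclude $\varphi(N,D,\Re)=Owen(N,D,\Re)$.

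I do not expect a serious obstacle here: the argument is short and the only point needing attention is the reduction of the IBC bound under $\mathcal{E}=\varnothing$, after which efficiency pins the allocation down uniquely. What is worth emphasizing is that TI and ACP play no role in this case, which is exactly the message of the proposition---three properties already suffice when there are no essential players.
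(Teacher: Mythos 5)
Your proof is correct and follows essentially the same route as the paper's: one direction by appeal to Proposition \ref{con}, and the other by combining NE, the IBC bound $x_{i}\leq\sum_{t=1}^{T}y_{t}^{\ast}(N\setminus\mathcal{E}^{t})d_{t}^{i}=o_{i}$ (valid since $\mathcal{E}=\varnothing$ forces $\mathcal{E}^{t}=\varnothing$ for every $t$), and EF to pin down each coordinate. You are in fact slightly more explicit than the paper, which leaves both the identity $\omega=Owen$ under $\mathcal{E}=\varnothing$ and the final sum-to-equality argument implicit.
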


\begin{proof}
(If) Immediately follows by Proposition \ref{con}. \medskip

\noindent (Only if) By NE, $\varphi (N,D,\Re )\neq \varnothing $. Take $x\in
\varphi (N,D,\Re ).$ Since all players $i\in N$ are inessential, by IBC, it
holds that $x_{i}\leq \sum_{t=1}^{T}y_{t}^{\ast }(N\backslash \mathcal{E}%
^{t})d_{t}^{i}$ $=\sum_{t=1}^{T}y_{t}^{\ast }(N)d_{t}^{i}=\omega
_{i}(N,D,\Re )$ for each $i\in N$ $.$ Therefor,e by EF, $\varphi (N,D,\Re
)=\omega (N,D,\Re )=Owen(N,D,\Re )$.\medskip
\end{proof}

The main Theorem of this section shows that the Omega point is
the unique allocation rule that satisfies the aforementioned five properties.

\begin{theorem}
\label{charac} An allocation rule on $(N,D,\Re )\in \Upsilon$\ satisfies the
properties EF, NE, IBC, TI and ACP if and only if it coincides with the
Omega point.
\end{theorem}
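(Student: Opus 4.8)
The plan is to dispose of the ``if'' direction immediately by Proposition \ref{con}, which already establishes that the Omega point satisfies EF, NE, IBC, TI and ACP. All the substance lies in the ``only if'' direction: given an arbitrary allocation rule $\varphi$ satisfying the five properties, I would show that $\varphi(N,D,\Re)=\omega(N,D,\Re)$ for every $(N,D,\Re)\in\Upsilon$. Using NE I would pick any $x\in\varphi(N,D,\Re)$ and aim to prove $x=\omega(N,D,\Re)$; since $x$ is arbitrary and $\varphi$ is nonempty, this forces $\varphi$ to be the singleton $\{\omega\}$.

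The central device is ACP, which decomposes $x=\sum_{t=1}^{T}z_{t}$ where $z_{t}\in\varphi(N,D_{t},\Re)$ whenever $|\mathcal{E}^{t}|\le 1$ and $z_{t}=Owen(N,D_{t},\Re)$ whenever $|\mathcal{E}^{t}|\ge 2$. The key preliminary remark is that, because $D_{t}$ carries demand only in period $t$ and the dual optima $y_{t}^{\ast}$ depend only on $\Re$, the single-period situation $(N,D_{t},\Re)$ has essential set exactly $\mathcal{E}^{t}$ and satisfies $\omega_{i}(N,D_{t},\Re)=\omega_{i}^{t}(N,D,\Re)$. I would then prove $z_{t}=\omega(N,D_{t},\Re)$ period by period. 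If $|\mathcal{E}^{t}|=0$, Proposition \ref{con2} applied to $(N,D_{t},\Re)$ (invoking NE, EF, IBC) yields $z_{t}=\omega(N,D_{t},\Re)=Owen(N,D_{t},\Re)$ at once. If $|\mathcal{E}^{t}|\ge 2$, ACP gives $z_{t}=Owen(N,D_{t},\Re)$, and since the Omega point of a single-period situation with at least two essential players reduces to $y_{t}^{\ast}(N)d_{t}^{i}=Owen_{i}(N,D_{t},\Re)$, again $z_{t}=\omega(N,D_{t},\Re)$.

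The decisive case is $|\mathcal{E}^{t}|=1$, say $\mathcal{E}^{t}=\{k\}$, and this is the main obstacle because it is the only point where the \emph{inequality} IBC and the \emph{equality} TI must be combined to extract an exact componentwise identity. The observation I would establish first is that the Omega point attains the IBC bound with equality for every inessential player: for each $i\neq k$ one has $\omega_{i}(N,D_{t},\Re)=y_{t}^{\ast}(N\setminus\{k\})d_{t}^{i}$, which holds for fans since $\omega_{i}^{t}=y_{t}^{\ast}(N)d_{t}^{i}+q_{t}(k,i)=y_{t}^{\ast}(N\setminus\{k\})d_{t}^{i}$, and for non-fans since then $q_{t}(k,i)=0$, forcing $d_{t}^{i}=0$ or $y_{t}^{\ast}(N\setminus\{k\})=y_{t}^{\ast}(N)$. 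Applying IBC to $(N,D_{t},\Re)$ gives $z_{t,i}\le y_{t}^{\ast}(N\setminus\{k\})d_{t}^{i}=\omega_{i}(N,D_{t},\Re)$ for every $i\neq k$, while TI gives $\sum_{i\neq k}z_{t,i}=c^{(N,D_{t},\Re)}(N\setminus\{k\})=\sum_{i\neq k}y_{t}^{\ast}(N\setminus\{k\})d_{t}^{i}$. The matching total together with the componentwise upper bounds forces $z_{t,i}=\omega_{i}(N,D_{t},\Re)$ for all $i\neq k$, and then EF pins down the remaining coordinate $z_{t,k}$. Summing over the periods and using $\omega_{i}(N,D,\Re)=\sum_{t=1}^{T}\omega_{i}^{t}(N,D,\Re)=\sum_{t=1}^{T}\omega_{i}(N,D_{t},\Re)$ yields $x=\sum_{t=1}^{T}z_{t}=\omega(N,D,\Re)$, which completes the argument.
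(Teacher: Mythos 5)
Your proposal is correct and takes essentially the same route as the paper's own proof: ACP decomposition into the single-period situations $(N,D_{t},\Re)$, Proposition \ref{con2} for periods with no essential player, the IBC upper bounds combined with the TI equality to squeeze out $z_{t,i}=\omega_{i}(N,D_{t},\Re)$ for $i\neq k$ when $\mathcal{E}^{t}=\{k\}$, and EF to fix the essential player's coordinate. The only differences are expository: you make explicit that the essential set of $(N,D_{t},\Re)$ is $\mathcal{E}^{t}$, split the fan/non-fan verification of $\omega_{i}(N,D_{t},\Re)=y_{t}^{\ast}(N\setminus\{k\})d_{t}^{i}$, and handle $\left\vert \mathcal{E}^{t}\right\vert \geq 2$ period by period, whereas the paper absorbs that case into the final summation.
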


\begin{proof}
(If) The \textit{if} part of the proof is direct from Proposition \ref{con}.
\medskip

\noindent (Only if) Let $\varphi $ be an allocation rule. The case where the
number of essential players is zero, namely $\left\vert \mathcal{E}%
\right\vert =0$, follows from Proposition \ref{con2}. Then, it remains to
prove the case when $\left\vert \mathcal{E}\right\vert \geq 1.$In this case,
we know that $D=D_{1}+D_{2}+...+D_{T}$ where $D_{t}$ is (see (\ref{eq3}%
)):
\begin{equation*}
D_{t}=\left(
\begin{array}{ccccccc}
0 & \dots & 0 & d_{t}^{1} & 0 & \dots & 0 \\
0 & \dots & 0 & d_{t}^{2} & 0 & \dots & 0 \\
\vdots & \dots & 0 & \vdots & 0 & \dots & \vdots \\
0 & \dots & 0 & d_{t}^{n} & 0 & \dots & 0%
\end{array}%
\right).
\end{equation*}%
Then for all $t\in T$, $(N,D_{t},\Re )$ is a PI-situation with $D_{t}$ an
integer matrix. This implies that $(N,D_{t},\Re )$ belongs to $\Upsilon $.
Therefore, for any $t\in T,$ the Omega point for $(N,D_{t},\Re )$ is $%
(\omega _{i}(N,D_{t},\Re ))_{i=1,\ldots ,n}$:

By NE, $\varphi (N,D_{t},\Re )\neq \varnothing $. for each situation $%
(N,D_{t},\Re )$ we have two cases:

\begin{itemize}
\item $\left\vert \mathcal{E}^{t}\right\vert =0,$ then by Proposition \ref%
{con2} $\varphi (N,D_{t},\Re )=\omega (N,D_{t},\Re )=Owen(N,D_{t},\Re ).\ $

\item $\mathcal{E}^{t}=\{k\}.$ Take $u\in \varphi (N,D_{t},\Re )$, by IBC $%
u_{i}\leq y_{t}^{\ast }(N\backslash \{k\})d_{t}^{i}$ for all $i\in
N\backslash \{k\}$ and by TY $u_{N\backslash \{k\}}=y_{t}^{\ast
}(N\backslash \{k\})d_{t}^{N\setminus \{k\}}$. Hence for all $i\in
N\backslash \{k\}$ $u_{i}=y_{t}^{\ast }(N\backslash
\{k\})d_{t}^{i}=y_{t}^{\ast }(N)d_{t}^{i}+q_{t}(\mathcal{E}^{t},i)=\omega
_{i}(N,D_{t},\Re )$. Finally, by EF $u_{k}=c(N)-c(N\backslash
\{k\})=y_{t}^{\ast }(N)d_{t}^{N}-y_{t}^{\ast }(N\backslash
\{k\})d_{t}^{N\setminus \{k\}}=y_{t}^{\ast }(N)d_{t}^{k}-\left( y_{t}^{\ast
}(N\backslash \{k\})-y_{t}^{\ast }(N)\right) d_{t}^{N\backslash
\{k\}}=y_{t}^{\ast }(N)d_{t}^{k}-\sum_{j\in F^{t}}q_{t}(k,j)=\omega
_{k}(N,D_{t},\Re )$.
\end{itemize}

Therefore, if $x\in \varphi (N,D,\Re )$ by ACP one has that $%
x=z_{1}+...+z_{t}$ with $z_{t}\in \varphi (N,D_{t},\Re )$ for all $t\in T$,
and so
\begin{equation*}
x=\sum_{t=1}^{T}z_{t}=\sum_{t\in T/\left\vert \mathcal{E}^{t}\right\vert
\leq 1}\omega (N,D_{t},\Re )+\sum_{t\in T/\left\vert \mathcal{E}%
^{t}\right\vert \geq 2}Owen(N,D_{t},\Re )=\omega (N,D,\Re ).
\end{equation*}%
The above equation implies that $\varphi (N,D,\Re )=\omega (N,D,\Re )$.
\end{proof}
\bigskip

Finally, we prove that all the properties used in Theorem \ref{charac} are logically independent.
That is, the characterization of the Omega point is tight in the sense that no property is redundant.

\begin{example}
Let $\varphi $ be a solution rule defined on $\Upsilon $ as
\begin{equation*}
\varphi (N,D,\Re ):=\left\{
\begin{array}{cc}
\left( \frac{c^{(N,D,\Re )}(N)}{2},\frac{c^{(N,D,\Re )}(N)}{2}\right) , &
(N,D,\Re )\in \Upsilon ^{1} \\
&  \\
\omega (N,D,\Re ), & \text{otherwise,}%
\end{array}%
\right.
\end{equation*}%
\noindent where
\begin{equation*}
\Upsilon ^{1}:=\left\{ (N,D,\Re )\in \Upsilon \left/ \left\vert N\right\vert
=2,T=2,\mathcal{E}^{1}=\{1,2\},\mathcal{E}^{2}\text{ }=\varnothing \right.
\right\} .
\end{equation*}%
\noindent $\varphi (N,D,\Re )$ satisfies EF, NE, IBC and TI, but not ACP.
\end{example}

\begin{example}
Let $\varphi $ be a solution rule defined on $\Upsilon $ as
\begin{equation*}
\varphi (N,D,\Re ):=\left\{
\begin{array}{cc}
\left( c^{(N,D,\Re )}(N),0\right) , & (N,D,\Re )\in \Upsilon ^{2} \\
&  \\
\omega (N,D,\Re ), & \text{otherwise,}%
\end{array}%
\right.
\end{equation*}%
\noindent where
\begin{equation*}
\Upsilon ^{2}:=\left\{ (N,D,\Re )\in \Upsilon \left/ \left\vert N\right\vert
=2,T=1,\mathcal{E}=\varnothing \text{ }\right. \right\} .
\end{equation*}

\noindent {$\varphi $}$(N,D,\Re )${\ } satisfies EF, NE, ACP and TI, but not
IBC.
\end{example}

\begin{example}
Let $\varphi $ be a solution rule defined on $\Upsilon $ as

\begin{equation*}
\varphi (N,D,\Re ):=Owen(N,D,\Re )
\end{equation*}

\noindent {$\varphi $}$(N,D,\Re )${\ } satisfies EF, NE, IBC, and ACP, but
not TI.
\end{example}

\begin{example}
Let $\varphi $ be a solution rule defined on $\Upsilon $ as
\begin{equation*}
\varphi (N,D,\Re ):=\left\{
\begin{array}{cc}
\left( c^{(N,D,\Re )}(N\setminus \{\mathcal{E}^{1}\}),c^{(N,D,\Re
)}(N\setminus \{\mathcal{E}^{1}\})\right) , & (N,D,\Re )\in \Upsilon ^{3} \\
&  \\
\omega (N,D,\Re ), & \text{otherwise,}%
\end{array}%
\right. ,
\end{equation*}%
where%
\begin{equation*}
\Upsilon ^{3}:=\left\{ (N,D,\Re )\in \Upsilon \left/ \left\vert N\right\vert
=2,T=1,\left\vert \mathcal{E}^{1}\right\vert =1\text{ }\right. \right\} .
\end{equation*}

\noindent {$\varphi $}$(N,D,\Re )${\ } satisfies NE, IBC, TI and ACP, but
not EF.
\end{example}

\begin{example}
Let $\varphi $ be a solution rule defined on $\Upsilon $ as
\begin{equation*}
\varphi (N,D,\Re ):=\varnothing .
\end{equation*}

\noindent {$\varphi $}$(N,D,\Re )${\ satisfies }EF, ACP, IBC, and TI. but
not NE.
\end{example}

\section{\protect\bigskip \label{QPQ set}Quid Pro Quo allocations}

As we already mentioned, the Omega point can be considered the natural aspiration of
the essential players to achieve the biggest cost reduction while the Owen
point reflects their altruistic character. We combine both extreme characteristics and
define the $\lambda$-agreement $a(\lambda ):=\lambda \omega +\left( 1-\lambda \right)
o $ with $\lambda \in \left[ 0,1\right] ,$ as the convex linear combination
of the Owen point and the Omega point. The parameter $\lambda $ represents
here the weight given to individual
behavior, by those players who want to maximize their cost reduction,
compared to altruistic behavior (by $1-\lambda$), which benefits the other players.
\smallskip

The set of all the above agreements is called Quid Pro Quo allocation set.

\begin{definition}[Quid Pro Quo allocation set]
\bigskip Let $(N,D,\Re )\in \Upsilon $ and $(N,c)$ be the corresponding
PI-game. We define the Quid pro quo allocation set as follows:
\begin{equation*}
QPQ(N,c):=\left\{ a(\lambda )\ \text{such that }\lambda \in \left[ 0,1\right]
\right\} .
\end{equation*}
\end{definition}

The Quid Pro Quo allocation set, henceforth QPQ-set, is a parametric family of core-allocations.
That is, $QPQ(N,c)\subseteq Core(N,c)$.
\medskip

The following example illustrate the wealth of the QPQ set of a PI-situation with multiple  essential
players.

\begin{example}
\label{example 3} Let us consider a PI-situation with four players in four
periods with demand, and production, inventory and backlogging costs given
in the following table:

\begin{equation*}
\begin{tabular}{|c|c|c|c|c||c|c|c|c||c|c|c|c||c|c|c|c|}
\hline
& \multicolumn{4}{|c}{Demand} & \multicolumn{4}{||c||}{Production} &
\multicolumn{4}{|c||}{Inventory} & \multicolumn{4}{||c|}{Backlogging} \\
\hline
P1 & 2 & 1 & 2 & 2 & 1 & 2 & 2 & 2 & 1 & 1 & 1 &  & 2 & 2 & 2 &  \\ \hline
P2 & 2 & 2 & 1 & 2 & 2 & 1 & 2 & 2 & 1 & 1 & 1 &  & 2 & 2 & 2 &  \\ \hline
P3 & 2 & 1 & 2 & 2 & 2 & 2 & 1 & 2 & 1 & 1 & 1 &  & 2 & 2 & 2 &  \\ \hline
P4 & 2 & 1 & 1 & 2 & 2 & 2 & 2 & 1 & 1 & 1 & 1 &  & 2 & 2 & 2 &  \\ \hline
\end{tabular}%
\end{equation*}%
\newline
\newline
The above table described a cooperative game with a characteristic function
detailed in the following table:

\begin{equation*}
\begin{array}{|c|c|c|c|c||c|c|c|c||c|c|c||c|c|c||c||}
\hline
& d_{1}^{_{S}} & d_{2}^{_{S}} & d_{3}^{_{S}} & d_{4}^{_{S}} & p_{1}^{_{S}} &
p_{2}^{_{S}} & p_{3}^{_{S}} & p_{4}^{_{S}} & h_{1}^{_{S}} & h_{2}^{_{S}} &
h_{3}^{_{S}} & b_{1}^{_{S}} & b_{2}^{_{S}} & b_{3}^{_{S}} & c \\ \hline
\{1\} & 2 & 1 & 2 & 2 & 1 & 2 & 2 & 2 & 1 & 1 & 1 & 2 & 2 & 2 & 12 \\ \hline
\{2\} & 2 & 2 & 1 & 2 & 2 & 1 & 2 & 2 & 1 & 1 & 1 & 2 & 2 & 2 & 12 \\ \hline
\{3\} & 2 & 1 & 2 & 2 & 2 & 2 & 1 & 2 & 1 & 1 & 1 & 2 & 2 & 2 & 12 \\ \hline
\{4\} & 2 & 1 & 1 & 2 & 2 & 2 & 2 & 1 & 1 & 1 & 1 & 2 & 2 & 2 & 10 \\ \hline
\{1,2\} & 4 & 3 & 3 & 4 & 1 & 1 & 2 & 2 & 1 & 1 & 1 & 2 & 2 & 2 & 21 \\
\hline
\{1,3\} & 4 & 2 & 4 & 4 & 1 & 2 & 1 & 2 & 1 & 1 & 1 & 2 & 2 & 2 & 20 \\
\hline
\{1,4\} & 4 & 2 & 3 & 4 & 1 & 2 & 2 & 1 & 1 & 1 & 1 & 2 & 2 & 2 & 18 \\
\hline
\{2,3\} & 4 & 3 & 3 & 4 & 2 & 1 & 1 & 2 & 1 & 1 & 1 & 2 & 2 & 2 & 22 \\
\hline
\{2,4\} & 4 & 3 & 2 & 4 & 2 & 1 & 2 & 1 & 1 & 1 & 1 & 2 & 2 & 2 & 19 \\
\hline
\{3,4\} & 4 & 2 & 3 & 4 & 2 & 2 & 1 & 1 & 1 & 1 & 1 & 2 & 2 & 2 & 19 \\
\hline
\{1,2,3\} & 6 & 4 & 5 & 6 & 1 & 1 & 1 & 2 & 1 & 1 & 1 & 2 & 2 & 2 & 27 \\
\hline
\{1,2,4\} & 6 & 4 & 5 & 6 & 1 & 1 & 2 & 1 & 1 & 1 & 1 & 2 & 2 & 2 & 24 \\
\hline
\{1,3,4\} & 6 & 3 & 5 & 6 & 1 & 2 & 1 & 1 & 1 & 1 & 1 & 2 & 2 & 2 & 23 \\
\hline
\{2,3,4\} & 6 & 4 & 4 & 6 & 2 & 1 & 1 & 1 & 1 & 1 & 1 & 2 & 2 & 2 & 26 \\
\hline
\{1,2,3,4\} & 8 & 5 & 6 & 8 & 1 & 1 & 1 & 1 & 1 & 1 & 1 & 2 & 2 & 2 & 27 \\
\hline
\end{array}%
\end{equation*}

Here, $y^{\ast }(N)=\left( 1,1,1,1\right) $ and the Owen point
is $o=(7,7,7,6)$. Moreover, $\mathcal{E}=N,$ because of each player is essential
just in one period. For example in period 1, $\mathcal{E}^{1}=\{1\}$ and $%
F^{1}=\{2,3,4\}$. In addition, $q_{1}(1,j)=2$ for $j\in F^{1}$, $%
Q^{1}=(-6,2,2,2),Q^{2}=(1,-3,1,1),Q^{3}=(2,1,-4,1)$ and $Q^{4}=(2,2,2,-6).$
\medskip

It is easy to check that $Q=(-1,2,1,-2)$ that is, players 1 and 4 are interested in improving
the Owen point, and they would prefer the Omega point. However, players 2 and 3, still being essential, get some benefit with the Owen point's and they would prefer to keep on it.
\smallskip

On the other hand, here the omega point is $\omega =(6,9,8,4)$ and the QPQ set is given by:
\begin{equation*}
QPQ(N,c):=\left\{ (7-\lambda ,7+2\lambda ,7+\lambda ,6-2\lambda )\ \text{%
such that }\lambda \in \left[ 0,1\right] \right\}
\end{equation*}

If we consider the same weight for both individual and altruistic behaviors, we get the Shapley,
which also matches the Nucleolus. That is, for $\lambda =\frac{1}{2}$ the Shapley
value and Nucleolus coincides and both are equal to $\left( \frac{13}{2},8,%
\frac{15}{2},5\right) $.
\end{example}
\medskip

At this point we wonder whether this coincidence always holds for every PI-game. The
answer is no, in general, as example \ref{example 4} reveals.
\medskip

The main result of this section shows that, if no player can get a cost reduction
in any coalition without an essential player, then the equal agreement, $a\left( \frac{1}{2}\right) $,
coincides with the Shapley value and the Nucleolus. In some sense, it is a Solomonic
agreement between the players who demand cost reductions (individual behaviour) and those who do
not (altruistic behaviour). For that, we call $a\left( \frac{1}{2}\right) $ Solomonic allocation
and denote it $\varsigma (N,c).$

\begin{proposition}
Let $(N,D,\Re )\in \Upsilon $ and $(N,c)$ be the corresponding PI-game. Assume that
for each $t=1,\ldots ,T$ the following conditions are simultaneously fulfilled:

\begin{enumerate}
\item[(i)] $\left\vert \mathcal{E}^{t}\right\vert \leq 1$,

\item[(ii)] $y_{t}^{\ast }(\mathcal{E}^{t})=y_{t}^{\ast }(N)$ \ if $\mathcal{%
E}^{t}\neq \phi ,$

\item[(iii)] $y_{t}^{\ast }(N\backslash \mathcal{E}^{t})=y_{t}^{\ast
}(\{i\}) $ for all $i\in N\backslash \mathcal{E}^{t}.$ \medskip

\noindent Then, $\varsigma (N,c)=$ $\phi (N,c)=\eta (N,c).$
\end{enumerate}

\begin{proof}
Consider $(N,D_{t},\Re )\in \Upsilon $ and $(N,c^{t})$ be the corresponding
PI-game, with, (only period $t$ has demand)%
\begin{equation*}
D_{t}=\left( d^{ip}\right) _{\substack{ i=1,...,n  \\ p=1,...,T}},\text{ }%
d^{ip}=\left\{
\begin{array}{cc}
d_{p}^{i} & \text{if }t=p, \\
0 & \text{otherwise.}%
\end{array}%
\right.
\end{equation*}

We will denote to simplify notation $o(N,D_{t},\Re )$ and $\omega
(N,D_{t},\Re )$ as $o^{t}$ and $\omega ^{t}$, respectively. By (i) we
consider only two cases:

\begin{itemize}
\item If $\left\vert \mathcal{E}^{t}\right\vert =0$ then $\omega
^{t}=o^{t}=Core(N,c^{t})=\eta (N,c^{t})$ since the Nucleolus always belongs
to the core of a game. Moreover, because of the condition (iii) $y_{t}^{\ast
}(N)=y_{t}^{\ast }(\{i\})$ for all $i\in N,$ then $c^{t}(S)=o_{S}^{t}$ for
all $S\subseteq N,$ It is easy to verify that all players are dummy players
then $\phi (N,c^{t})=o^{t}.$

\item If $\left\vert \mathcal{E}^{t}\right\vert =1,\ (\mathcal{E}%
^{t}=\{k\}). $

Note that if $k\in S$ then $c^{t}(S\cup \{i\})-c^{t}(S)=y_{t}^{\ast
}(N)d_{t}^{i},$ otherwise $(k\notin S)$ then by condition (iii) $c^{t}(S\cup
\{i\})-c^{t}(S)=y_{t}^{\ast }(S\cup \{i\})d_{t}^{S\cup \{i\}}-y_{t}^{\ast
}(S)d_{t}^{S}=y_{t}^{\ast }(N\backslash \mathcal{E}^{t})d_{t}^{i}$ for all $%
i\in N\backslash \mathcal{E}^{t}.$

If $i\in N\backslash \mathcal{E}^{t}$ then,
\begin{eqnarray*}
\phi _{i}(N,c^{t}) &=&\sum_{S\subseteq N\diagdown \{i\}}\gamma (S)\cdot
\left[ c^{t}(S\cup \{i\})-c^{t}(S)\right] = \\
&&\sum_{S\subseteq N\diagdown \{i\}/k\in S}\gamma (S)\cdot \left[
c^{t}(S\cup \{i\})-c^{t}(S)\right] \\
&&+\sum_{S\subseteq N\diagdown \{i\}/k\notin S}\gamma (S)\cdot \left[
c^{t}(S\cup \{i\})-c^{t}(S)\right] \\
&=&\sum_{S\subseteq N\diagdown \{i\}/k\in S}\gamma (S)\cdot y_{t}^{\ast
}(N)d_{t}^{i}+\sum_{S\subseteq N\diagdown \{i\}/k\notin S}\gamma (S)\cdot
y_{t}^{\ast }(N\backslash \mathcal{E}^{t})d_{t}^{i} \\
&=&y_{t}^{\ast }(N)d_{t}^{i}\cdot \left( \sum_{S\subseteq N\diagdown
\{i\}/k\in S}\gamma (S)\right) \\
&&+y_{t}^{\ast }(N\backslash \mathcal{E}^{t})d_{t}^{i}\cdot \left(
\sum_{S\subseteq N\diagdown \{i\}/k\notin S}\gamma (S)\right) \text{ } \\
&=&\frac{1}{2}\cdot y_{t}^{\ast }(N)d_{t}^{i}+\frac{1}{2}\cdot y_{t}^{\ast
}(N\backslash \mathcal{E}^{t})d_{t}^{i} \\
&=&\frac{1}{2}\cdot y_{t}^{\ast }(N)d_{t}^{i}+\frac{1}{2}\cdot \left(
y_{t}^{\ast }(N)d_{t}^{i}+\left( y_{t}^{\ast }(N\backslash \{\mathcal{E}%
^{t}\})-y_{t}^{\ast }(N)\right) \cdot d_{t}^{i}\right) \\
&=&\frac{1}{2}\cdot o_{i}^{t}+\frac{1}{2}\cdot \omega _{i}^{t}
\end{eqnarray*}

By efficiency of Shapley value $\phi _{\mathcal{E}^{t}}(N,c^{t})=\frac{1}{2}%
\cdot o_{\mathcal{E}^{t}}^{t}+\frac{1}{2}\cdot \omega _{\mathcal{E}%
^{t}}^{t}. $ Moreover, Shapley value {satisfies additivity} property, thus
for all player $i\in N$
\begin{eqnarray*}
\phi _{i}(N,c) &=&\sum_{t=1}^{T}\phi _{i}(N,c^{t})=\sum_{t=1}^{T}\left(
\frac{1}{2}\cdot o_{i}^{t}+\frac{1}{2}\cdot \omega _{i}^{t}\right) \\
&=&\frac{1}{2}\cdot \sum_{t=1}^{T}o_{i}^{t}+\frac{1}{2}\cdot
\sum_{t=1}^{T}\omega _{i}^{t} \\
&=&\frac{1}{2}\cdot o_{i}(N,D,\Re )+\frac{1}{2}\cdot \omega _{i}(N,D,\Re )
\end{eqnarray*}

since the Owen point is additive for the demands (demonstrated in Guardiola
et al. (2008)) and $\omega _{i}(N,D_{t},\Re )=\omega _{i}^{t}(N,D,\Re )$.
Hence, $\varsigma (N,c)=\phi (N,c).$

Now, we will prove that the {Shapley} value coincides with the Nucleolus. As
we have seen previously if the properties $(i),(ii)$ and $(iii)$ are
satisfied for a period $t=1,\ldots ,T$ and for each $i\in N$ and for all $%
S\subseteq N\setminus \{i\}$
\begin{equation*}
\Delta _{i}^{t}(S):=c^{t}(S\cup \{i\})-c^{t}(S)=\left\{
\begin{array}{ccc}
y_{t}^{\ast }(N)d_{t}^{i} & if & \mathcal{E}^{t}\in S\text{ and }i\notin
\mathcal{E}^{t} \\
y_{t}^{\ast }(N\backslash \mathcal{E}^{t})d_{t}^{i} & if & \mathcal{E}%
^{t}\notin S\text{ and }i\notin \mathcal{E}^{t} \\
y_{t}^{\ast }(N)d_{t}^{i}-y_{t}^{\ast }(N\backslash \mathcal{E}^{t})d_{t}^{i}
& if & i\in \mathcal{E}^{t}%
\end{array}%
\right.
\end{equation*}

similarly we get that

\begin{equation*}
\Delta _{i}^{t}(N\setminus \left( S\cup \{i\}\right) )=\left\{
\begin{array}{ccc}
y_{t}^{\ast }(N\backslash \mathcal{E}^{t})d_{t}^{i} & if & \mathcal{E}%
^{t}\in S\text{ and }i\notin \mathcal{E}^{t} \\
y_{t}^{\ast }(N)d_{t}^{i} & if & \mathcal{E}^{t}\notin S\text{ and }i\notin
\mathcal{E}^{t} \\
y_{t}^{\ast }(N)d_{t}^{i}-y_{t}^{\ast }(N\backslash \mathcal{E}^{t})d_{t}^{i}
& if & i\in \mathcal{E}^{t}%
\end{array}%
\right.
\end{equation*}

Hence, $\Delta _{i}^{t}(S)+\Delta _{i}^{t}(N\setminus \left( S\cup
\{i\}\right) )=y_{t}^{\ast }(N)d_{t}^{i}+y_{t}^{\ast }(N\backslash \mathcal{E%
}^{t})d_{t}^{i}$ if $i\in N\backslash \mathcal{E}^{t}$ for all $S\subseteq
N\setminus \{i\}$ and $\Delta _{i}^{t}(S)+\Delta _{i}^{t}(N\setminus \left(
S\cup \{i\}\right) )=2\cdot \left( y_{t}^{\ast }(N)d_{t}^{i}-y_{t}^{\ast
}(N\backslash \mathcal{E}^{t})d_{t}^{i}\right) $ if $i\in \mathcal{E}^{t}$
for all $S\subseteq N\setminus \mathcal{E}^{t}.$ We consider $\Delta
_{i}(S):=\sum_{t=1}^{T}\Delta _{i}^{t}(S)$ for each $i\in N$ and for all $%
S\subseteq N\setminus \{i\}.$ Thus $\Delta _{i}(S)+\Delta _{i}(N\setminus
\left( S\cup \{i\}\right) )$ is a constant for all $S\subseteq N\setminus
\{i\}$ and for all $i\in N$. Then $(N,c)$ is a PS-game and $\varsigma (N,c)=$
$\phi (N,c)=\eta (N,c).$
\end{itemize}
\end{proof}
\end{proposition}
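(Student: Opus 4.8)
The plan is to exploit the additive decomposition of the characteristic function across periods together with the additivity of the three relevant concepts, so that the whole identity reduces to a single-period verification. Writing $D=\sum_{t=1}^{T}D_{t}$ as in (\ref{eq3}) and $c^{t}:=c^{(N,D_{t},\Re )}$, we have $c=\sum_{t=1}^{T}c^{t}$. Since the Shapley value is additive, $\phi (N,c)=\sum_{t=1}^{T}\phi (N,c^{t})$; the Owen point is additive in the demands (Guardiola et al. 2008), and by the ACP property already established for the Omega point one has $\omega _{i}(N,D_{t},\Re )=\omega _{i}^{t}(N,D,\Re )$. Hence it suffices to prove, for each period $t$, that $\phi (N,c^{t})=\tfrac{1}{2}o^{t}+\tfrac{1}{2}\omega ^{t}$ and then sum.

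For a fixed period $t$, condition (i) leaves only two cases. If $\mathcal{E}^{t}=\varnothing $, then $\omega ^{t}=o^{t}$, and (iii) forces $y_{t}^{\ast }(N)=y_{t}^{\ast }(\{i\})$ for every $i$, so $c^{t}(S)=o_{S}^{t}$ for all $S$; every player is a dummy and $\phi (N,c^{t})=o^{t}=\tfrac{1}{2}o^{t}+\tfrac{1}{2}\omega ^{t}$. If $\mathcal{E}^{t}=\{k\}$, I would compute the marginal contribution of a non-essential player $i$. Using $c^{t}(S)=y_{t}^{\ast }(S)d_{t}^{S}$ together with the monotonicity of $y_{t}^{\ast }$ and conditions (ii)--(iii), one collapses $y_{t}^{\ast }(S)$ to one of two values: $y_{t}^{\ast }(S)=y_{t}^{\ast }(N)$ whenever $k\in S$, and $y_{t}^{\ast }(S)=y_{t}^{\ast }(N\backslash \mathcal{E}^{t})$ whenever $k\notin S$. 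Thus $c^{t}(S\cup \{i\})-c^{t}(S)=y_{t}^{\ast }(N)d_{t}^{i}$ if $k\in S$ and $=y_{t}^{\ast }(N\backslash \mathcal{E}^{t})d_{t}^{i}$ if $k\notin S$. The crucial observation is the combinatorial identity $\sum_{S\ni k}\gamma (S)=\sum_{S\not\ni k}\gamma (S)=\tfrac{1}{2}$, with $\gamma (S):=\tfrac{s!(n-s-1)!}{n!}$, which is merely the statement that in a uniformly random ordering $k$ precedes $i$ with probability one half. This yields $\phi _{i}(N,c^{t})=\tfrac{1}{2}y_{t}^{\ast }(N)d_{t}^{i}+\tfrac{1}{2}y_{t}^{\ast }(N\backslash \mathcal{E}^{t})d_{t}^{i}=\tfrac{1}{2}o_{i}^{t}+\tfrac{1}{2}\omega _{i}^{t}$ for all $i\neq k$, and efficiency of the Shapley value pins the coordinate of $k$ to $\tfrac{1}{2}o_{k}^{t}+\tfrac{1}{2}\omega _{k}^{t}$ as well. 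Summing over $t$ gives $\phi (N,c)=\varsigma (N,c)$.

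For the coincidence with the Nucleolus, I would show that $(N,c)$ is a PS-game in the sense of Kar et al. (2009) and invoke their result that $\phi =\eta $ on that class. Working period by period, I would compute both $\Delta _{i}^{t}(S)=c^{t}(S\cup \{i\})-c^{t}(S)$ and the complementary difference $\Delta _{i}^{t}(N\backslash (S\cup \{i\}))$; the same case analysis (whether $i$ is essential and whether $\mathcal{E}^{t}\subseteq S$) shows that their sum does not depend on $S$: it equals $y_{t}^{\ast }(N)d_{t}^{i}+y_{t}^{\ast }(N\backslash \mathcal{E}^{t})d_{t}^{i}$ for $i\notin \mathcal{E}^{t}$ and $2\bigl(y_{t}^{\ast }(N)-y_{t}^{\ast }(N\backslash \mathcal{E}^{t})\bigr)d_{t}^{i}$ for $i\in \mathcal{E}^{t}$. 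Setting $\Delta _{i}(S):=\sum_{t=1}^{T}\Delta _{i}^{t}(S)$, the period-wise constancy adds up to constancy of $\Delta _{i}(S)+\Delta _{i}(N\backslash (S\cup \{i\}))$ across all $S$, verifying the PS-game condition. Hence $\phi (N,c)=\eta (N,c)$, and with the previous paragraph we conclude $\varsigma (N,c)=\phi (N,c)=\eta (N,c)$.

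The main obstacle I anticipate is not a single hard step but keeping the single-period case analysis fully consistent: one must use both (ii) and (iii), together with the monotonicity of $y_{t}^{\ast }$ with respect to coalitions, to justify collapsing every $y_{t}^{\ast }(S)$ to one of the two values above (in particular the identity $c^{t}(S\cup \{i\})-c^{t}(S)=y_{t}^{\ast }(N\backslash \mathcal{E}^{t})d_{t}^{i}$ when $k\notin S$), and then to confirm that the complementary differences combine to an $S$-independent constant. The combinatorial $\tfrac{1}{2}$ identity, though elementary, is precisely what makes the equal weighting $\lambda =\tfrac{1}{2}$ the correct one, so I would state it explicitly rather than leave it implicit.
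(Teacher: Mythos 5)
Your proposal is correct and follows essentially the same route as the paper's own proof: period-wise decomposition via $D_t$, the two-case analysis ($\mathcal{E}^t=\varnothing$ giving dummies, $\mathcal{E}^t=\{k\}$ giving the two-valued marginal contributions), the $\tfrac{1}{2}$--$\tfrac{1}{2}$ split of the Shapley weights plus efficiency for the essential player, additivity over periods, and finally the PS-game argument of Kar et al. (2009) to identify the Shapley value with the Nucleolus. The only difference is presentational: you make the combinatorial identity $\sum_{S\ni k}\gamma(S)=\sum_{S\not\ni k}\gamma(S)=\tfrac{1}{2}$ explicit, which the paper leaves implicit.
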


The reader may notice that for those situations in which the properties
$(i)$, $(ii)$ {and} $(iii)$ hold and, in addition, $Q=0$ (i.e., $o=\omega$),
then $QPQ(N,c)=\left\{ o(N,D,\Re )\right\} =$ $\left\{ \phi (N,c)\right\}
=\left\{ \eta (N,c)\right\} $. Otherwise, the core is larger.
\medskip

Finally, we analyze the relationships between conditions $(i)$, $(ii)$, $(iii)$
and concavity of PI-games.

\begin{proposition}
Let $(N,D,\Re )\in \Upsilon $ and $(N,c)$ be the corresponding PI-game. If
for each $t=1,\ldots ,T$ conditions (i), (ii) and (ii) are fulfilled
simultaneously the $(N,c)$ is concave.

\begin{proof}
Consider $(N,D_{t},\Re )\in \Upsilon $ and $(N,c^{t})$ be the corresponding
PI-game,

\begin{itemize}
\item[(a)] If $\left\vert \mathcal{E}^{t}\right\vert =0$ then $y_{t}^{\ast
}(\{i\})=y_{t}^{\ast }(N)$ for all $i\in N$ henceforth $c^{t}(S)-c^{t}(S%
\setminus \{i\})=y_{t}^{\ast }(N)d_{t}^{i}$ for all $i\in N$ and for all $%
S\subseteq N.$ Hence $(N,c^{t})$ is concave.
\end{itemize}

\begin{itemize}
\item[(b)] If $\left\vert \mathcal{E}^{t}\right\vert =1,\ $let say $\mathcal{%
E}^{t}=\{k\}$. Then two cases can be {distinguished}:

\item[(b1)] $k\in S$ $\subseteq T\subset N$ then $c^{t}(S)-c^{t}(S\setminus
\{i\})=y_{t}^{\ast }(N)d_{t}^{i}=c^{t}(T)-c^{t}(T\setminus \{i\})$ for all $%
i\in N\setminus \{k\}.$ Finally
\begin{eqnarray*}
c^{t}(S)-c^{t}(S\setminus \{k\}) &\geq &c^{t}(T)-c^{t}(T\setminus \{k\}); \\
y_{t}^{\ast }(N)d_{t}^{S}-y_{t}^{\ast }(N\setminus \{k\})d_{t}^{S\setminus
\{k\}} &\geq &y_{t}^{\ast }(N)d_{t}^{T}-y_{t}^{\ast }(N\setminus
\{k\})d_{t}^{T\setminus \{k\}}; \\
y_{t}^{\ast }(N\setminus \{k\})d_{t}^{T\setminus S} &\geq &y_{t}^{\ast
}(N)d_{t}^{T\setminus S}.
\end{eqnarray*}%
It is true since $y_{t}^{\ast }(S)\geq y_{t}^{\ast }(R)$ for all $S\subseteq
R\subseteq N$\ and all $t\in \{1,...,T\}.$

\item[(b2)] $k\notin S$ and $k\in T.$ By condition (iii) $%
c^{t}(S)-c^{t}(S\setminus \{i\})=y_{t}^{\ast }(N\setminus
\{k\})d_{t}^{i}\geq c^{t}(T)-c^{t}(T\setminus \{i\})$ since if $k\in T$ is {%
satisfied} $c^{t}(T)-c^{t}(T\setminus \{i\})=y_{t}^{\ast }(N)d_{t}^{i}$ and
if $k\notin T$ we have that $c^{t}(T)-c^{t}(T\setminus \{i\})=y_{t}^{\ast
}(N\setminus \{k\})d_{t}^{i}.$
\end{itemize}

Finally, by additivity property of PI-games with respect to periods (see
Guardiola et al. (2008)) $(N,c)$ is concave.
\end{proof}
\end{proposition}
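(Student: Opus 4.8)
The plan is to exploit the additivity of PI-games with respect to periods. By Guardiola et al. (2008), the characteristic function decomposes as $c=\sum_{t=1}^{T}c^{t}$, where $c^{t}$ is the PI-game associated to the single-period situation $(N,D_{t},\Re )$ in which only period $t$ carries demand. Since a finite sum of concave games is again concave (marginal contributions add), it suffices to prove that each period game $c^{t}$ is concave under hypotheses (i)--(iii). This reduction is the natural first move because the conditions are stated period by period, and it is exactly the device already used in the preceding proposition on the Solomonic allocation.

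For a fixed period $t$, I would first use the monotonicity of the dual optima, $y_{t}^{\ast }(S)\geq y_{t}^{\ast }(R)$ for $S\subseteq R$, together with (ii) and (iii), to pin down the value $y_{t}^{\ast }(S)$ for every coalition. When $\mathcal{E}^{t}=\varnothing$, condition (iii) forces $y_{t}^{\ast }(\{i\})=y_{t}^{\ast }(N)$ for every $i$, and squeezing by monotonicity gives $y_{t}^{\ast }(S)=y_{t}^{\ast }(N)$ for all $S$; hence $c^{t}(S)=y_{t}^{\ast }(N)d_{t}^{S}$ is additive and trivially concave. When $\mathcal{E}^{t}=\{k\}$, condition (ii) gives $y_{t}^{\ast }(\{k\})=y_{t}^{\ast }(N)$, so every coalition containing $k$ has $y_{t}^{\ast }(S)=y_{t}^{\ast }(N)$; condition (iii) gives $y_{t}^{\ast }(\{i\})=y_{t}^{\ast }(N\setminus \{k\})$ for $i\neq k$, so every coalition avoiding $k$ has $y_{t}^{\ast }(S)=y_{t}^{\ast }(N\setminus \{k\})$. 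Thus $c^{t}$ takes the explicit piecewise form $c^{t}(S)=y_{t}^{\ast }(N)d_{t}^{S}$ if $k\in S$ and $c^{t}(S)=y_{t}^{\ast }(N\setminus \{k\})d_{t}^{S}$ otherwise.

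With $c^{t}$ in this explicit form, concavity reduces to checking that the marginal contribution $c^{t}(S)-c^{t}(S\setminus \{i\})$ is non-increasing as $S$ grows, which I would verify by separating $i\neq k$ from $i=k$. For a non-essential $i$, the marginal equals $y_{t}^{\ast }(N)d_{t}^{i}$ when $k\in S$ and $y_{t}^{\ast }(N\setminus \{k\})d_{t}^{i}$ when $k\notin S$, so the only transition to check is $k\notin S\subseteq T\ni k$, where the required inequality becomes $y_{t}^{\ast }(N\setminus \{k\})\geq y_{t}^{\ast }(N)$, true by monotonicity. For $i=k$, the marginal computes to $y_{t}^{\ast }(N)d_{t}^{k}+\left( y_{t}^{\ast }(N)-y_{t}^{\ast }(N\setminus \{k\})\right) d_{t}^{S\setminus \{k\}}$, and since the bracketed coefficient is $\leq 0$, this decreases as $d_{t}^{S\setminus \{k\}}$ grows, giving concavity.

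The step I expect to be most delicate is the squeezing argument that fixes $y_{t}^{\ast }(S)$ on every coalition: one must argue both inequalities carefully, since conditions (ii)--(iii) only prescribe the dual values on $\mathcal{E}^{t}$, on singletons, and on $N$, and it is monotonicity that propagates these prescriptions to all intermediate coalitions. Once that dictionary is in place, the remaining case analysis is a routine sign check on $y_{t}^{\ast }(N)-y_{t}^{\ast }(N\setminus \{k\})\leq 0$, and concavity of $(N,c)$ follows by summing the concave period games over $t$.
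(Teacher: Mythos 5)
Your proof is correct and follows essentially the same route as the paper's: decompose $c=\sum_{t}c^{t}$ by periods, determine the period game explicitly via the dual values, check concavity by cases on whether the unique essential player $k$ belongs to the coalitions, and invoke monotonicity $y_{t}^{\ast }(S)\geq y_{t}^{\ast }(R)$ together with additivity across periods. In fact, your explicit ``squeezing'' step pinning down $y_{t}^{\ast }(S)$ for every intermediate coalition makes rigorous a point the paper's proof leaves implicit.
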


Next example shows that conditions $(i)$, $(ii)$, $(iii)$, although
necessaries, are no sufficient for concavity.

\begin{example}
\label{example 4} Let us consider a PI-situation with three players in three
periods with demand, and production, inventory and backlogging costs given
in the following table:

\begin{equation*}
\begin{tabular}{|c|c|c|c||c|c|c||c|c||c|c|}
\hline
& \multicolumn{3}{|c||}{Demand} & \multicolumn{3}{|c||}{Production} &
\multicolumn{2}{|c||}{Inventory} & \multicolumn{2}{|c|}{Backlogging} \\
\hline
P1 & 10 & 10 & 10 & 1 & 2 & 3 & 1 & 2 & 1 & 1 \\ \hline
P2 & 10 & 10 & 10 & 2 & 1 & 3 & 1 & 2 & 1 & 1 \\ \hline
P3 & 10 & 10 & 10 & 3 & 3 & 1 & 1 & 2 & 2 & 2 \\ \hline
\end{tabular}%
\end{equation*}

Using those data one can obtain the cooperative
game with characteristic function described below:

\begin{equation*}
\begin{array}{|c|c|c|c||c|c|c||c|c||c|c||c||}
\hline
& d_{1}^{_{S}} & d_{2}^{_{S}} & d_{3}^{_{S}} & p_{1}^{_{S}} & p_{2}^{_{S}} &
p_{3}^{_{S}} & h_{1}^{_{S}} & h_{2}^{_{S}} & b_{1}^{_{S}} & b_{2}^{_{S}} & c
\\ \hline
\{1\} & 10 & 10 & 5 & 1 & 2 & 3 & 1 & 2 & 1 & 1 & 45 \\ \hline
\{2\} & 10 & 10 & 10 & 2 & 1 & 3 & 1 & 1 & 1 & 1 & 50 \\ \hline
\{3\} & 10 & 10 & 10 & 3 & 3 & 1 & 1 & 2 & 2 & 2 & 70 \\ \hline
\{1,2\} & 20 & 20 & 15 & 1 & 1 & 3 & 1 & 1 & 1 & 1 & 70 \\ \hline
\{1,3\} & 20 & 20 & 15 & 1 & 2 & 1 & 1 & 2 & 1 & 1 & 75 \\ \hline
\{2,3\} & 20 & 20 & 20 & 2 & 1 & 1 & 1 & 1 & 1 & 1 & 80 \\ \hline
\{1,2,3\} & 30 & 30 & 25 & 1 & 1 & 1 & 1 & 1 & 1 & 1 & 85 \\ \hline
\end{array}%
\end{equation*}

It is easy to check that the above game is concave, but condition $(iii)$ does
not hold. Indeed, for the first period, $\mathcal{E}^{1}=\left\{ 1\right\}$ but $y_{1}^{\ast }(\left\{ 2,3\right\})
=2<3= y_{1}^{\ast }(\left\{ 3\right\} ).$
\smallskip

Moreover, the Nucleolus $\eta (N,c)=\left( \frac{70}{3},
\frac{85}{3},\frac{100}{3}\right)$ is lightly different from the Shapley value,
$\phi (N,c)=\left( \frac{125}{6},\frac{155}{6}, \frac{115}{3}\right)$.
\smallskip

Finally, $o=(30,30,25)$, $\omega
=(25,30,30)$ and so the Solomonic allocation is $\varsigma (N,c)=\left( \frac{55}{2},30,\frac{55}{2}\right).$
\end{example}

\section{Concluding remarks}

This paper completes the study of the PI-games presented in Guardiola et al.
(2008, 2009). Those two papers proposed the Owen point as a natural core-allocation,
which does not pay attention to the role that essential players play in reducing the costs of their fans.
In that sense, essential players could consider the Owen point as an altruistic core-allocation.
However, the core was not studied in depth there.

Here we have analyzed carefully the core structure of PI-games,
and we have realized that the number of extreme point of its core is exponential
in the number of players. Then, we have proposed a new core-allocation, the Omega
point, that compensates the essential players for their role in reducing the
costs of their fans. Based on the Owen and Omega points we have defined the QPQ-set.
Since every QPQ allocation is a convex combination of the Owen and the Omega points,
we have paid special attention to the equally weighted QPQ allocation, the Solomonic allocation. Finally, we have provided some necessary conditions for the coincidence of the latter with the
Shapley value and the Nucleolus.

\end{document}